\newtheorem{prop}[subsubsection]{Proposition}
\theoremstyle{plain}
\newtheorem{lem}[subsubsection]{Lemma}
\theoremstyle{plain}
\newtheorem{thm}[subsubsection]{Theorem}
\theoremstyle{definition}
\newtheorem{defi}[subsubsection]{Definition}
\theoremstyle{remark}
\newcommand{\E}{\mathbb{E}}
\newcommand{\ke}{\text{Ker}}
\newcommand{\im}{\text{Im}}
\newcommand{\C}{\mathbb{C}}
\newcommand{\R}{\mathbb{R}}
\newcommand{\Z}{\mathbb{Z}}
\newcommand{\Pro}{\mathbb{P}}
\newcommand{\Hb}{\mathcal{H}}
\newcommand{\Bb}{\mathcal{B}}
\newcommand{\M}{\mathcal{M}}
\newcommand{\F}{\mathcal{F}}
\newcommand{\Sb}{\mathcal{S}}
\newcommand{\Sp}{\textrm{Sp}}
\newcommand{\Tr}{\mathrm{Tr}}
\newcommand{\ii}{\mathrm{i}}
\newcommand{\Lin}{\mathcal{L}}
\newcommand{\der}{\mathrm{d}}
\newcommand{\Der}{\mathrm{D}}
\newcommand{\ind}{\mathds{1}}
\newcommand{\A}{\mathcal{A}}
\begin{document}

\title{Central Limit Theorem and Large Deviation Principle for Continuous Time Open Quantum Walks}

\author{Hugo BRINGUIER\\ \\ Institut de Mathématiques de Toulouse ; UMR5219,\\ Université de Toulouse ; CNRS,\\UPS IMT, F-31062 Toulouse Cedex 9, France\\}

\maketitle
\begin{abstract}\textit{Open Quantum Walks} (OQWs), originally introduced in \cite{MR2930584}, are quantum generalizations of classical Markov chains.  Recently, natural continuous time models of OQW have been developed in  \cite{pellegrini2014continuous}. These models, called Continuous Time Open Quantum Walks (CTOQWs), appear as natural continuous time limits of discrete time OQWs. In particular they are quantum extensions of continuous time Markov chains. This article is devoted to the study of homogeneous CTOQW on $\Z^d$. We focus namely on their associated quantum trajectories which allow us to prove a Central Limit Theorem for the "position" of the walker as well as a Large Deviation Principle.


\end{abstract}
\section{Introduction}

Open Quantum Walks concern evolution on lattices driven by quantum operations. They describe Markovian dynamics influenced by internal degrees of freedom. They have been introduced originally by \cite{MR2930584} (see also \cite{MR2432029}). These OQWs are promising tools to model physical problems, especially in computer science (see \cite{sinayskiy2012efficiency}). They can also model a variety of phenomena, as energy transfer in biological systems (\cite{marais2013decoherence}).
 
Continuous time models have been developed as a natural continuous time limit of discrete time models \cite{pellegrini2014continuous, MR3268105}. In particular in \cite{MR3268105}, a natural extension of Brownian motion called Open Quantum Brownian Motion has been constructed. In this article, we focus on the continuous time open quantum walks (CTOQWs) model presented in \cite{pellegrini2014continuous}. More precisely, we focus on CTOQWs on $\mathbb Z^d$. Briefly speaking, CTOQWs on $\mathbb Z^d$ concern the evolution of density operators of the form 

\begin{equation}\label{diags}
\mu=\sum_{i\in\Z^d}\rho(i)\otimes|i\rangle\langle i|\in\Hb\otimes\C^{\Z^d}
\end{equation}
where the "$\Z^d$-component" represents the "position" of the walker and $\Hb$ is a Hilbert space describing the internal degrees of freedom. In particular, if $\mathcal D$ denotes the set of density operators of the form \eqref{diags}, CTOQWs are described by a semigroup $\{\phi_t\}$ such that, $\phi_t$ preserves $\mathcal D$ for all $t\geq 0$.

In the context of quantum walks, one is mainly interested in the position of the walker. At time $0$, starting with density matrix in $\mathcal D$ as \eqref{diags}, the quantum measurement of the "position" gives rise to a probability distribution $q_0$ on $\Z^d$, such that, for all $i\in\Z^d$,
\[q_0(i)=\Pro("\text{that the walker is in }i")=\Tr(\rho(i)) \ .\] 
As well, after evolution, if
\[\mu_t=\phi_t(\mu)=\sum_{i\in\Z^d}\rho^{(t)}(i)\otimes|i\rangle\langle i|\]
then
\[q_t(i)=\Pro("\text{that the walker, at time}\, t, \text{ is in }i")=\Tr(\rho^{(t)}(i)) \ .\] 
In \cite{pellegrini2014continuous}, it has been shown that usual classical continuous time Markov chains are particular cases of CTOQWs. In particular one can easily construct models where the distribution $q_t$ corresponds to the one of a classical continuous time Markov chain. Contrary to continuous time Markov chains, the distribution $q_t$ of CTOQWs cannot be in general recovered by the knowledge of the initial distribution $q_0$. One needs to have access to the full knowledge of the initial state $\mu$. In this sense, this justifies the name \textit{quantum walks}. 

Our models of continuous time quantum walks are rather different from the usual models of unitary quantum walks. An essential difference concerns the large time behaviour of the corresponding distribution $q_t$. Let $Q_t$ be a random variable of law $q_t$, in the unitary quantum walk theory it has been shown that $(Q_t)$ satisfies a Central Limit Theorem of the type
$$\frac{Q_t}{t}\underset{t\rightarrow\infty}{\longrightarrow}\tilde{Q} \, ,$$
where $\tilde Q$ has distribution
$$\frac{dx}{\pi\sqrt{1-x^2}}.$$
Note that such behaviour is not usual in classical probability where usually one expects speed in $\sqrt t$ and Gaussian law as limit in the Central Limit Theorem (CLT). In our context, the distributions $(q_t)_{t\geq 0}$ express a rather classical behaviour  in large time in the sense that a more usual CLT holds. In particular this paper is devoted to show that for CTOQWs one has the following weak convergence
$$\frac{Q_t-m}{\sqrt{t}}\underset{t\rightarrow\infty}{\longrightarrow}\mathcal N(0,\sigma^2)\, ,$$
where $\mathcal N(0,\sigma^2)$ denotes usual Gaussian law. Such phenomena have also been observed in the discrete setting of OQWs \cite{MR3296637}. A key point to show this result is the use of the quantum trajectories associated to the CTOQWs. In general, quantum trajectories describe evolutions of quantum system undergoing indirect measurements (see \cite{barchielli2009quantum} for an introduction). In the context of CTOQWs, quantum trajectories describe the evolution of the states undergoing indirect measurements of the position of the walker. In particular these quantum trajectories appear as solution of jump-type stochastic differential equations called stochastic master equations (see \cite{pellegrini2014continuous} for link between discrete and continuous time models in the context of OQW, one can also consult \cite{MR3268105} for such an approach in the context of Open Quantum Brownian Motion). In the physic literature, note that such models appear also naturally in order to describe non-Markovian evolutions. They are called \textit{non-Markov generalization of Lindblad equations} (see \cite{breuer2007non, MR2546050, MR2759470}). 

After establishing the CLT, our next goal is to investigate a Large Deviation Principle (LDP) for the position of the walker. In particular under additional assumptions, one can apply the G\"artner-Ellis Theorem in order to obtain the final result (one can consult \cite{carbone2014homogeneous} for a similar result for discrete time OQWs).
\bigskip

The article is structured as follows. In Section 2, we present the model of CTOQWs on $\mathbb Z^d$. Next we develop the theory of quantum trajectories which describe the continuous measurement of the position. In Section 3, we present the Central Limit Theorem. Section 4 is devoted to the Large Deviation Principle (LDP). Finally in Section 5, we present some examples which illustrate the CLT and the LDP.

\section{Continuous Time Open Quantum Walks}

\subsection{Main setup}

The models of Continuous Time Open Quantum Walks have been formalized in \cite{pellegrini2014continuous}. They arise as continuous limits of discrete time OQWs (we do not recall the discrete time models and we refer to \cite{MR2930584}). These limits processes are described by particular types of \textit{Lindblad master equations}. Originally, these equations appear in the "non-Markovian generalization of Lindblad  theory" from Breuer \cite{breuer2007non}. In this article, we focus on nearest neighbors, homogeneous CTOQWs on $\Z^d$.

In the sequel, $\Hb$ denotes a finite dimensional Hilbert space and $\Sb_\Hb$ denotes the space of density matrix on $\Hb$: \[\mathcal{S_{\Hb}}=\{\rho\in\Bb(\Hb)\mid\rho^*=\rho,\rho\geq 0 , \Tr(\rho)=1 \}.\]
We put $\mathcal K_d=\mathcal H\otimes\mathbb C^{\mathbb Z^d}$ where $\mathbb C^{\mathbb Z^d}$ stands for the position of a particle while $\mathcal H$ corresponds to the internal degree of freedom of this particle. We consider the canonical basis $\{e_1,...,e_d\}$ of $\Z^d$, we set $e_0=0_d$ and $e_{d+r}=-e_r$ for all $r\in\{1,...,d\}$. The canonical basis of $\mathbb C^{\mathbb Z^d}$ is denoted by $({\vert i\rangle})_{i\in\mathbb Z^d}$.\\

As announced we focus on particular diagonal density matrices of $\mathcal K_d$:
$$\mathcal D=\left\{\mu\in\mathcal B(\mathcal K_d), \mu=\sum_{i\in\mathbb Z^d}\rho(i)\otimes\vert i\rangle\langle i\vert, \rho(i)\geq0,\sum_{i\in\mathbb Z^d}\textrm{tr}\big(\rho(i)\big)=1\right\}.$$

In the sequel we shall consider evolutions on $\mathcal K_d$ which preserve $\mathcal D$. To this end we consider a family of operators $\{D_r\}_{r=1,\ldots,2d}$ on $\mathcal B(\mathcal H)$ 
and we define the operators $\{B_i^r\}_{r=1,\ldots,2d}$ on $\mathcal B(\mathcal K_d)$ such that 
$B_i^r=D_r\otimes\vert i+e_r\rangle\langle i\vert$.\\

Now as announced the CTOQWs are generated by particular Lindblad master equations. Let $\M_c$ the following Lindblad operator on $\Hb\otimes\C^{\Z^d}$, 
\[\begin{array}{llll}\M_c:&\Bb(\Hb\otimes\C^{\Z^d})&\rightarrow&\Bb(\Hb\otimes\C^{\Z^d})\\
&\mu&\mapsto &-\ii [H\otimes I,\mu]+ \displaystyle{\sum_{i\in\mathbb Z^d}\sum_{r=1}^{2d}}  \big({B_i^r}  \mu B_i^{r*}-\frac{1}{2}\{{B_i^{r*} B_i^r},  \mu \}\big)

\end{array}\]
where
$H$ is a self-adjoint operator on $\Hb$ which is called the Hamiltonian.\\

Let us introduce the operator
$$D_0=-\ii H-\frac{1}{2}\sum\limits_{r=1}^{2d}  {D_r^* D_r} \, .$$
The next computation shows that $\mathcal M_c$ preserves the set $\mathcal D$.

\begin{align*}
 \M_c(\mu)&= \sum\limits_{i\in\Z^d}\Bigg[-\ii[H,{\rho(i)}]\otimes|i\rangle \langle i| +\sum\limits_{r=1}^{2d}D_r\rho(i)D_r^{*}\otimes|i+e_r\rangle\langle i+e_r|\\
   \hphantom{\M_c(\mu)=}&  \hphantom{\sum\limits_{i\in\Z^d}\Big(-\ii[H,{\rho(i)}]\otimes|i\rangle \langle i| + \, \, \,}-\frac{1}{2}\sum\limits_{r=1}^{2d}\{D_r^*D_r,\rho(i)\}\otimes|i\rangle\langle i|\Bigg]\\
&=\sum\limits_{i\in\Z^d}\Bigg[\big(D_0\rho(i)+\rho(i)D_0^*\big)\otimes|i\rangle\langle i| +\sum\limits_{r=1}^{2d}D_r\rho(i)D_r^{*}\otimes|i+e_r\rangle\langle i+e_r|\Bigg] \\
&=\sum\limits_{i\in\Z^d}\Bigg(D_0\rho(i)+\rho(i)D_0^* +\sum\limits_{r=1}^{2d}D_r\rho(i-e_r)D_r^{*}\Bigg)\otimes|i\rangle\langle i| \, , \end{align*}
for all $\mu=\sum_{i\in\mathbb Z^d}\rho(i)\otimes\vert i\rangle\langle i\vert$.
\\



The following proposition describes precisely our model of CTOQWs.
\begin{prop}{\cite{pellegrini2014continuous}}\label{proppel}
Let $\mu^{(0)}=\displaystyle{\sum_{i\in\Z^d}}\rho^{(0)}(i)\otimes|i\rangle\langle i| $, the equation
 \begin{equation}\label{ctoqw}
 \frac{\der}{\der t}\mu^{(t)}=\M_c(\mu^{(t)}),
 \end{equation}
  with initial condition $\mu^{(0)}$ admits a unique solution $(\mu^{(t)})_{t\geq 0}$ with values in $\mathcal D$.

 More precisely, $\mu^{(t)}$ is of the form $\mu^{(t)}=\sum\limits_{i\in\Z^d}\rho^{(t)}(i)\otimes|i\rangle\langle i| $ such that:
\[\frac{\der}{\der t}\rho^{(t)}(i)=D_0\rho^{(t)}(i)+\rho^{(t)}(i)D_0^* +\sum_{r=1}^{2d}D_r\rho^{(t)}(i-e_r)D_r^{*}  \, ,\]
for all $i\in\mathbb Z^d$.
\end{prop}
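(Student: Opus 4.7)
The plan is to reduce the operator-valued ODE on $\mathcal{B}(\mathcal{K}_d)$ to a countable coupled system of matrix-valued ODEs on $\mathcal{B}(\mathcal{H})$, and then solve that system via Duhamel's formula. The computation given just before the statement already shows that if $\mu$ has the diagonal form $\sum_i \rho(i)\otimes|i\rangle\langle i|$ then so does $\mathcal{M}_c(\mu)$, with $i$-th component $D_0\rho(i)+\rho(i)D_0^*+\sum_{r=1}^{2d}D_r\rho(i-e_r)D_r^*$. Therefore if a solution of the form $\mu^{(t)}=\sum_i\rho^{(t)}(i)\otimes|i\rangle\langle i|$ exists, then necessarily the family $(\rho^{(t)}(i))_{i\in\Z^d}$ solves the system in the proposition; conversely any solution of that system yields a solution of \eqref{ctoqw} living in $\mathcal{D}$. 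So the problem reduces to solving the coupled system.

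I would solve this coupled system by the variation-of-constants formula. Since $D_0\in\mathcal{B}(\mathcal{H})$, the map $T_t:\rho\mapsto e^{tD_0}\rho\,e^{tD_0^*}$ is a norm-continuous semigroup on $\mathcal{B}(\mathcal{H})$. Writing the system in integral form,
\[
\rho^{(t)}(i)=T_t\rho^{(0)}(i)+\sum_{r=1}^{2d}\int_0^t T_{t-s}\!\left(D_r\,\rho^{(s)}(i-e_r)\,D_r^*\right)\der s,
\]
I would obtain existence and uniqueness by Picard iteration in the Banach space $\ell^1(\Z^d;\mathcal{B}(\mathcal{H}))$ of absolutely summable families. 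Iterating and bounding each Picard term in trace norm, one gets a majorizing series of the form $e^{Ct}\|\mu^{(0)}\|_1$ for some $C$ depending only on $\|D_r\|$, which converges on every bounded time interval. This provides a unique global solution $(\rho^{(t)}(i))_{i\in\Z^d}$, and assembling the family into $\mu^{(t)}$ yields the announced solution in $\mathcal{D}$.

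There remain two positivity/normalization checks to make $\mu^{(t)}\in\mathcal{D}$. Positivity of each $\rho^{(t)}(i)$ follows from the Picard scheme: each iterate is a sum of maps of the form $\rho\mapsto e^{tD_0}A\rho A^* e^{tD_0^*}$ applied to previous iterates, all of which preserve the positive cone, and the limit of positive operators for the trace-norm is positive. For trace preservation, differentiating $t\mapsto\sum_i\Tr(\rho^{(t)}(i))$ and using cyclicity of the trace gives
\[
\frac{\der}{\der t}\sum_{i\in\Z^d}\Tr(\rho^{(t)}(i))=\sum_{i\in\Z^d}\Tr\!\left((D_0+D_0^*)\rho^{(t)}(i)+\sum_{r=1}^{2d}D_r^*D_r\,\rho^{(t)}(i)\right),
\]
after reindexing $i\mapsto i+e_r$ in the gain term; by the very definition $D_0+D_0^*=-\sum_r D_r^*D_r$, so the right-hand side vanishes and the total trace is conserved, equal to $1$.

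The main obstacle is not algebraic but functional-analytic: one must justify manipulations (term-by-term differentiation, reindexing of sums, exchange of sum and integral) that are not automatic because the position space $\Z^d$ is infinite. The $\ell^1$-contraction estimate on the Duhamel map, together with the fact that the coupling only involves nearest neighbours, is what makes everything absolutely convergent and allows the exchange to be performed rigorously on every compact time interval.
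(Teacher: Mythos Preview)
The paper does not prove this proposition; it is quoted from \cite{pellegrini2014continuous} without argument. Your proposal is a correct self-contained proof: the reduction to the coupled $\ell^1(\Z^d;\mathcal{B}(\mathcal{H}))$ system is exactly the computation displayed before the statement, the Duhamel/Picard scheme gives existence and uniqueness since the generator is bounded on $\ell^1$, positivity propagates through the iterates because each step is a sum of maps of the form $\rho\mapsto A\rho A^*$, and trace preservation follows from $D_0+D_0^*+\sum_r D_r^*D_r=0$ after the reindexing you indicate. One small remark: your uniqueness is established within the class of diagonal (i.e.\ $\mathcal{D}$-valued) solutions, which is precisely what the statement claims, so nothing is missing; and the exchange-of-limits issues you flag at the end are indeed controlled by the $\ell^1$ estimate you obtained from the Picard bound.
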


\begin{defi}
The evolution \eqref{ctoqw} is called a Continuous Time Open Quantum Walk on $\mathbb Z^d$.
\end{defi}

This definition is justified by the following.  The operator $B_i^r$ transcribes the idea that the particle localized in $|i\rangle $ can only jump to one of its nearest neighbors $|i+e_r\rangle$, and in this case, the transformation on $\Hb$ is governed by $D_r $. In the case the particle stands still, the evolution on $\Hb$ is governed by $D_0$. It is the exact analogue of the usual OQWs for continuous time evolutions. An interesting fact has been pointed out in \cite{pellegrini2014continuous}, usual continuous time classical Markov chains can be realized within this setup.

Now let us describe the probability distributions associated to CTOQWs. 

\begin{defi}Let $\mu^{(0)}=\displaystyle{\sum_{i\in\Z^d}}\rho^{(0)}(i)\otimes|i\rangle\langle i| $. Let $\mu^{(t)}=\displaystyle{\sum_{i\in\Z^d}}\rho^{(t)}(i)\otimes|i\rangle\langle i|$ be the solution of the equation
 \begin{equation*}
 \frac{\der}{\der t}\mu^{(t)}=\M_c\big(\mu^{(t)}\big) \, .
 \end{equation*}
 We define
 \begin{equation}
 q_t(i)=\Tr\big[\mu^{(t)}\left(I\otimes\vert i\rangle\langle i\vert\right)\big]=\Tr\big[\rho^{(t)}(i)\big]
 \end{equation}
and we denote $Q_t$ the random variable on $\Z^d$ of law $q_t$, that is
$$\mathbb P[Q_t=i]=q_t(i),$$
for all $i\in\mathbb Z^d$.
\end{defi}

As we can see in Section 3 and as it was announced in the introduction, the shape of $q_t$ seems to converge to Gaussian shape. This is exactly the result pointed out by the CLT in Section 3. In order to prove this, we shall need the theory of quantum trajectories for CTOQWs. 

\subsection{Quantum trajectories} 
As in the discrete case, quantum trajectories are essential tools for showing the CLT and the LDP. The description of quantum trajectories is less straightforward than the one in OQWs. It makes use of stochastic differential equations driven by jump processes. We refer to \cite{pellegrini2014continuous} for the justification of the below description and the link between discrete and continuous time models. One can also consult \cite{breuer2007non} where general indirect measurements for non-markovian generalization of Lindblad equations have been developped. 

\begin{prop}\label{omega}
Let $\mu^{(0)}=\sum_{i\in\Z^d}\rho^{(0)}(i)\otimes|i\rangle\langle i| $ be an initial state on $\Hb\otimes\C^{\Z^d}$. 
The quantum trajectory describing the indirect measurement of the position of the CTOQWs led by $\mathcal M_c$ is modeled by a Markov process $\left(\omega^{(t)}=\rho_t \otimes \vert X_t\rangle\langle X_t\vert\right)_{t\geq 0}$. This Markov process is valued in the set
$$\mathcal P=\big\{\rho\otimes\vert i\rangle\langle i\vert,\rho\in\mathcal S_{\mathcal H}, i\in\mathbb Z^d\big\} $$ such that \[\omega^{(0)}=\frac{\rho^{(0)}(i)}{\Tr\big(\rho^{(0)}(i)\big)}\otimes|i\rangle\langle i| \text{ with probability } \Tr\big(\rho^{(0)}(i)\big)\] and such that the following differential equation is satisfied:
\begin{align}\label{qtcc}
 \omega^{(t)}=&\,  \omega^{(0)}+\int_{0}^t\Big(D_0\rho_{s-}+\rho_{s-}D_0^*-\rho_{s-}\Tr(D_0\rho_{s-}+\rho_{s-}D_0^*)\Big)\otimes|X_{s-}\rangle\langle X_{s-}| \ \der s \nonumber\\
 &+\sum_{r=1}^{2d}\int_{0}^{t} \int_{\R}   \Bigg(\frac{D_r\rho_{s-}D_r^*}{\Tr(D_r\rho_{s-}D_r^*)}\otimes|X_{s-}+e_r\rangle\langle X_{s-}+e_r|\nonumber\\
 &\hphantom{+\mathlarger{\mathlarger{ \sum}}_{r=1}^{2d} \, lol   \Bigg( \, \, \,lolollol \,}-\rho_{s-}\otimes|X_{s-}\rangle\langle 
X_{s-}|\Bigg)\ind_{0<y<\Tr(D_r^{}\rho_{s-}D_r^*)} N^r(\der y, \der s)
\end{align}
where $\{N^r\}_{r\in\{1,\ldots,2d\}}$ are independent Poisson point processes on $\R^2$.

In particular the Markov process $(\rho_t,X_t)_{t\geq 0}$ is valued in $\mathcal S_{\mathcal H}\times\mathbb Z^d$ and satisfies 

\begin{eqnarray}\label{eqcul}
 \der\rho_{s}&=& \Big(D_0\rho_{s-}+\rho_{s-}D_0^*-\rho_{s-}\Tr(D_0\rho_{s-}+\rho_{s-}D_0^*)\Big)\der s \nonumber\\
 && + \sum_{r=1}^{2d} \, \int_{ y\in\R}\Bigg(\frac{D_r\rho_{s-}D_r^*}{\Tr(D_r\rho_{s-}D_r^*)}-\rho_{s-}\Bigg)\ind_{0<y<\Tr(D_r\rho_{s-}D_r^*)} N^r(\der y, \der s)\, ,\\
 \der X_s&=&\sum_{r=1}^{2d} \, \int_{ y\in\R}e_r \, \ind_{0<y<\Tr(D_r^{}\rho_{s-}D_r^*)} N^r(\der y, \der s)  
 \end{eqnarray}
 and  $(\rho_0,X_0)=\left(\frac{\rho^{(0)}(i)}{\Tr\big(\rho^{(0)}(i)\big)},i\right)$ with probability $\Tr\big(\rho^{(0)}(i)\big)$.
\end{prop}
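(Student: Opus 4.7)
The plan is to realize $(\rho_t,X_t)_{t\geq 0}$ as a piecewise deterministic Markov process driven by the Poisson random measures $N^r$, and then to verify that its mean reproduces the CTOQW solution $\mu^{(t)}$. Pathwise existence and uniqueness are straightforward in finite dimension: the total jump rate at state $\rho$ equals $\sum_{r=1}^{2d}\Tr(D_r\rho D_r^*)\leq\sum_r\|D_r\|^2$, which is uniformly bounded, so jumps form an almost surely locally finite sequence $0<T_1<T_2<\cdots$. Between two jumps, $X_t$ stays constant and $\rho_t$ solves the nonlinear ODE
$$\dot{\rho}_t=D_0\rho_t+\rho_t D_0^*-\rho_t\Tr\bigl(D_0\rho_t+\rho_tD_0^*\bigr),$$
which is the trace-normalization of the linear flow $\rho\mapsto e^{tD_0}\rho\, e^{tD_0^*}$; hence it is globally well posed on $\Sb_{\Hb}$ and keeps $\rho_t$ a density matrix. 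At each $T_k$ the indicators in \eqref{qtcc} decode which of the $2d$ channels has fired and one applies $\rho\mapsto D_r\rho D_r^*/\Tr(D_r\rho D_r^*)$, $X\mapsto X+e_r$; the indicator $\ind_{0<y<\Tr(D_r\rho_{s-}D_r^*)}$ guarantees the denominator is positive, and the jump preserves $\Sb_{\Hb}$. Concatenating these pieces yields a càdlàg $\Sb_{\Hb}\times\Z^d$-valued Markov process that is the unique solution of the SDE.

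The second step is to take expectation in \eqref{qtcc} and identify the result with \eqref{ctoqw}. Since $N^r(\der y,\der s)$ has intensity $\der y\,\der s$, integrating $y$ out against $\ind_{0<y<\Tr(D_r\rho_{s-}D_r^*)}$ produces the compensator $\Tr(D_r\rho_{s-}D_r^*)\,\der s$. Introducing $\pi^{(s)}:=\E[\omega^{(s)}]$ and its components $\pi^{(s)}(i):=\E\bigl[\rho_{s-}\,\ind_{X_{s-}=i}\bigr]$, the gain parts of the jumps contribute $\sum_r\sum_i D_r\pi^{(s)}(i)D_r^*\otimes|i+e_r\rangle\langle i+e_r|$, while the loss parts and the nonlinear drift correction combine on the diagonal as
$$-\Bigl(\Tr(D_0\rho_{s-}+\rho_{s-}D_0^*)+\sum_{r=1}^{2d}\Tr(D_r\rho_{s-}D_r^*)\Bigr)\,\rho_{s-}\otimes|X_{s-}\rangle\langle X_{s-}|.$$
The parenthesis vanishes identically thanks to the defining identity
$$\Tr(D_0\rho+\rho D_0^*)=-\sum_{r=1}^{2d}\Tr(D_r^*D_r\rho)=-\sum_{r=1}^{2d}\Tr(D_r\rho D_r^*),$$
which follows at once from $D_0=-\ii H-\tfrac12\sum_r D_r^*D_r$. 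Hence all nonlinear contributions cancel and one is left with
$$\pi^{(t)}=\mu^{(0)}+\int_0^t\M_c\bigl(\pi^{(s)}\bigr)\,\der s,$$
so that by the uniqueness statement in Proposition~\ref{proppel}, $\pi^{(t)}=\mu^{(t)}$.

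The only delicate point is the bookkeeping of the nonlinear terms and the justification of the exchange of expectation and stochastic integration; both issues are controlled by the uniform bound on the jump intensities, which lets us localize on intervals between successive jump times, apply Fubini to the compensator of each $N^r$, and add the $2d$ contributions unambiguously. The algebraic cancellation above is the heart of the argument: it is the standard mechanism by which the mean of a nonlinear stochastic master equation satisfies the associated linear Lindblad equation, and it is precisely the Lindblad form of $\M_c$ that makes it work.
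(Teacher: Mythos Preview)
Your argument is correct and in fact more complete than what the paper offers. In the paper, Proposition~\ref{omega} is not proved: it is presented as a modeling statement, with the justification deferred to \cite{pellegrini2014continuous} and \cite{breuer2007non}. The only supporting content is the Remark after the proposition, which describes informally the PDMP structure (deterministic flow between jumps, random jump times governed by the Poisson processes~\eqref{jump}, and the updates of $(\rho,X)$ at jump times). Your first paragraph makes this precise and adds what the paper omits: the uniform bound on the total jump rate, the identification of the inter-jump flow as the trace normalization of $e^{tD_0}\rho\,e^{tD_0^*}$ (hence global well-posedness in $\Sb_{\Hb}$), and the observation that the indicator forces the denominator to be positive whenever a jump actually occurs.

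Your second part, the expectation computation, is also correct, but note that it is not part of Proposition~\ref{omega}: the identity $\E[\omega^{(t)}]=\mu^{(t)}$ is the content of Proposition~\ref{lien}, whose first assertion the paper again attributes to \cite{pellegrini2014continuous} without giving the argument. Your cancellation of the nonlinear terms via $\Tr(D_0\rho+\rho D_0^*)=-\sum_r\Tr(D_r\rho D_r^*)$ is exactly the right mechanism, and the bounded-intensity justification for exchanging expectation with the stochastic integrals is adequate. So you have effectively supplied self-contained proofs for both Proposition~\ref{omega} and the first half of Proposition~\ref{lien}, where the paper relies on external references.
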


\noindent\textbf{Remark:} The second expression of the description of quantum trajectories is the exact continuous time analogue of the one described in \cite{MR3296637} for OQWs. Let us briefly explain how the quantum trajectories evolve in time. To this end we introduce:
\begin{equation}\label{jump}\forall r\in\{1,...,2d\}, \ \tilde{N}^r(t)=\int_{0}^{t}\int_{\R}\ind_{0<y<\Tr(D_r\rho_{s-}D_r^*)} N^r(\der y, \der s) \, .
\end{equation}
The processes $\tilde{N}^r$ are Poisson processes with intensity $\int_{0}^{t}\Tr(D_r\rho_{s-}D_r^*)\der s$. In particular the processes
\[\tilde{N}^r(t)-\int_{0}^{t}\Tr(D_r\rho_{s-}D_r^*)\der s\]
are martingales with respect to the filtration induced by $(\rho_t,X_t)_{t\geq 0}$. The evolution described by \eqref{qtcc} is deterministic and interrupted by jumps occurring at random time, it is typically a Piecewise Deterministic Markov Process. The jumps are generated by the Poisson processes \eqref{jump}.  As we can check from Eq. \eqref{qtcc}, if $\omega^{(0)}=\rho\otimes\vert i\rangle\langle i\vert$ for some $\rho\in\mathcal S_{\mathcal H}$ and $i\in\mathbb Z^d$ (that is $\vert X_0\rangle=\vert i\rangle$), the deterministic evolution let the position unchanged until a jump occurs. Since the Poisson processes $N^r$ are indepedent, only one Poisson process is involved. If $T$ denotes the time of the first jump and assume the process $N^r$ is involved, the internal degree of freedom is updated by $\rho_{T}=\frac{D_r\rho_{T-}D_r^*}{\Tr(D_r\rho_{T-}D_r^*)}$  and the position is changed and becomes $\vert i+e_r\rangle$. This means that the particle has jumped from the position $\vert i\rangle$ to the position $\vert i+e_r\rangle$. In other words we have $\vert X_t\rangle=\vert i\rangle$, for all $0\leq t\leq T-$ and $\vert X_T\rangle=\vert i+e_r\rangle$. Next, the deterministic evolution starts again with the new initial condition $\rho_T\otimes\vert i+e_r\rangle\langle i+e_r\vert$ until a new jump occur and so on.

%
 
%

The following result allows us to make the connection between CTOQWs and their associated quantum trajectories.

\begin{prop}\label{lien}
Let $\mu^{(t)}$ the OQW defined in Proposition \ref{proppel} and $\omega^{(t)}$ the associated quantum trajectory defined in Proposition \ref{omega}. Then we have
\[\forall t\geq 0, \ \E(\omega^{(t)})=\mu^{(t)} \, . \]

Moreover, for all $t\geq0$, the random variables $X_t$ and $Q_t$ have the same distributions $q_t$.
\end{prop}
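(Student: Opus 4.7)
The plan is to identify $t\mapsto \E(\omega^{(t)})$ as a solution of the Lindblad Cauchy problem appearing in Proposition \ref{proppel}, then invoke uniqueness. The initial condition matches by construction: the law of $\omega^{(0)}$ gives
\[
\E(\omega^{(0)})=\sum_{i\in\Z^d}\Tr(\rho^{(0)}(i))\cdot\frac{\rho^{(0)}(i)}{\Tr(\rho^{(0)}(i))}\otimes|i\rangle\langle i|=\mu^{(0)}.
\]
For the dynamics, I take the expectation of the integral equation \eqref{qtcc}. Since $\omega^{(t)}$ is a density operator, all terms are uniformly bounded in operator norm, which lets me interchange expectation and integral and guarantees integrability of the Poisson integrands.

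The key computation rewrites the jump integrals through their compensators: each $N^r(\der y,\der s)$ restricted by $\ind_{0<y<\Tr(D_r\rho_{s-}D_r^*)}$ has intensity $\Tr(D_r\rho_{s-}D_r^*)\,\der s$, and the compensated martingale parts vanish in expectation. The factor $\Tr(D_r\rho_{s-}D_r^*)$ cancels the denominator in the jump term, producing
\[
\sum_{r=1}^{2d}\E\bigl[D_r\rho_{s-}D_r^*\otimes|X_{s-}+e_r\rangle\langle X_{s-}+e_r|-\rho_{s-}\Tr(D_r\rho_{s-}D_r^*)\otimes|X_{s-}\rangle\langle X_{s-}|\bigr].
\]
Adding the deterministic drift from \eqref{qtcc} and using the Lindblad identity $D_0+D_0^*=-\sum_r D_r^*D_r$, the two trace-correction terms combine to $\rho_{s-}\Tr\bigl((D_0+D_0^*+\sum_r D_r^*D_r)\rho_{s-}\bigr)=0$, so all nonlinear contributions disappear. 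What remains is exactly
\[
\E\Bigl[(D_0\rho_{s-}+\rho_{s-}D_0^*)\otimes|X_{s-}\rangle\langle X_{s-}|+\sum_{r=1}^{2d}D_r\rho_{s-}D_r^*\otimes|X_{s-}+e_r\rangle\langle X_{s-}+e_r|\Bigr],
\]
which, by the block decomposition of $\M_c$ computed right before Proposition \ref{proppel} together with linearity of $\M_c$, equals $\M_c(\E(\omega^{(s)}))$. Hence $\E(\omega^{(t)})=\mu^{(0)}+\int_0^t\M_c(\E(\omega^{(s)}))\,\der s$, and uniqueness in Proposition \ref{proppel} yields $\E(\omega^{(t)})=\mu^{(t)}$.

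For the marginal law, I read off the position block by block. Writing $\E(\omega^{(t)})=\sum_{i\in\Z^d}\E(\rho_t\ind_{X_t=i})\otimes|i\rangle\langle i|$ and comparing with $\mu^{(t)}=\sum_i\rho^{(t)}(i)\otimes|i\rangle\langle i|$ gives $\rho^{(t)}(i)=\E(\rho_t\ind_{X_t=i})$. Since $\Tr(\rho_t)=1$ almost surely, taking the trace yields $q_t(i)=\Tr(\rho^{(t)}(i))=\Pro(X_t=i)$, so $X_t$ and $Q_t$ share the law $q_t$.

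The step I expect to require the most care is the cancellation of the trace-correction drift against the compensated jump contributions: one must be attentive that the shifts $|X_{s-}\rangle\to|X_{s-}+e_r\rangle$ occur only in the gain part of the jump, while the loss part is diagonal at $X_{s-}$, and that the Lindblad identity on $D_0$ is the precise algebraic fact that makes the nonlinear terms disappear. Once this bookkeeping is done, linearity of $\M_c$ and uniqueness of the Cauchy problem close the argument immediately.
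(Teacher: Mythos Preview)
Your proof is correct. For the first assertion the paper simply cites \cite{pellegrini2014continuous} without reproducing the argument, whereas you supply it: you check the initial condition, compensate the Poisson integrals, use the Lindblad constraint $D_0+D_0^*+\sum_r D_r^*D_r=0$ to kill the nonlinear trace corrections, recognise $\M_c(\omega^{(s-)})$, and conclude by linearity of $\M_c$ and uniqueness of the Cauchy problem. This is exactly the standard route and presumably what the cited reference contains, so your write-up is in fact more self-contained than the paper's.

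For the second assertion the paper and you do the same thing in two equivalent phrasings: the paper computes $\E(\phi(Q_t))$ for a generic bounded $\phi$ via $\Tr\bigl(\E(\omega^{(t)})(I\otimes|i\rangle\langle i|)\bigr)=\E\bigl(\ind_{X_t=i}\bigr)$, while you read off the $i$-th block directly as $\rho^{(t)}(i)=\E(\rho_t\ind_{X_t=i})$ and take the trace. Both are immediate consequences of part one and the product structure $\omega^{(t)}=\rho_t\otimes|X_t\rangle\langle X_t|$; neither buys anything over the other.
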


\begin{proof}
The first part is proved in \cite{pellegrini2014continuous}. For the second part, let $\phi$ a bounded continuous map on $\Z^d$, we get:
\begin{align*}\E(\phi(Q_t))&=\sum\limits_{i\in\Z^d}\phi(i)\Tr\big(\mu^{(t)}(I\otimes|i\rangle\langle i|)\big)\\&=\sum\limits_{i\in\Z^d}\phi(i)\Tr\big(\E(\omega^{(t)})(I\otimes|i\rangle\langle i|)\big)\\
&=\sum\limits_{i\in\Z^d}\phi(i)\E\Bigg(\Tr\big(\omega^{(t)}(I\otimes|i\rangle\langle i|)\big)\Bigg)\\
&=\sum\limits_{i\in\Z^d}\phi(i)\E\big(\Tr(|X_t\rangle\langle X_t||i\rangle\langle i|)\big)\\
&=\sum\limits_{i\in\Z^d}\phi(i)\E(\ind_{X_t=i})\\
&= \E(\phi(X_t))
\, ,\end{align*}
and the result holds.
\end{proof}

In the next section, we state the CLT.


\section{Central Limit Theorem }\label{parttclc}

This section is devoted to prove the Central Limit Theorem for CTOQWs. The result holds under some assumption concerning the Lindblad operator on $\Hb$. This operator is defined below.
$$\begin{array}{llll}\mathcal L:&\Bb(\Hb)&\rightarrow&\Bb(\Hb)\\
&\rho&\mapsto & D_0 \rho+\rho D_0^*+\displaystyle{\sum_{r=1}^{2d}}D_r\rho D_r^* \, .
\end{array}$$
Our main assumption for the CLT is the following.
\begin{itemize}
\item $(H1)$ There exists a unique density matrix $\rho_{inv}\in\mathcal S_{\mathcal H}$ such that 
$$\mathcal L(\rho_{inv})=0 \, .$$
In particular $\dim\textrm{Ker}(\mathcal L)=1$.
\end{itemize}

Under the condition $(H1)$, we have the following ergodic theorem which is a particular case of the Ergodic Theorem of \cite{MR2106330}. In particular this theorem shall be useful in the proof of the CLT.

\begin{thm}[\cite{MR2106330}]\label{ergod}
	Assume $(H1)$. Let $(\rho_t,X_t)_{t\geq 0}$ the Markov process defined in Proposition \ref{omega}, therefore \[\dfrac{1}{t}\int_{0}^{t}\rho_{s}\der s\overset{a.s.}\longrightarrow \rho_{inv} \, .\]
\end{thm}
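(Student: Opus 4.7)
The plan is to reduce the statement to a classical Birkhoff-type ergodic theorem for the Markov process $(\rho_t)_{t \geq 0}$ viewed on the compact metric space $\mathcal{S}_{\mathcal{H}}$. Although the full process $(\rho_t, X_t)$ lives on the non-compact space $\mathcal{S}_{\mathcal{H}} \times \mathbb{Z}^d$, the evolution of $\rho_t$ itself does not involve $X_t$: both the deterministic drift in \eqref{eqcul} and the jump rates $\Tr(D_r \rho_{s-} D_r^*)$ are functions of $\rho_{s-}$ alone. So $(\rho_t)_{t \geq 0}$ is a time-homogeneous, piecewise deterministic Feller Markov process on the compact set $\mathcal{S}_{\mathcal{H}}$. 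Krylov--Bogolyubov then guarantees at least one invariant probability measure $\pi$ for it.

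The key algebraic step would be to identify the barycenter of every such $\pi$. For $A \in \Bb(\Hb)$ and the linear functional $f_A(\rho) = \Tr(A \rho)$, the infinitesimal generator read off from \eqref{eqcul} is
\[
\mathcal{A} f_A(\rho) = \Tr\bigl(A (D_0 \rho + \rho D_0^*)\bigr) - \Tr(A\rho)\,\Tr(D_0 \rho + \rho D_0^*) + \sum_{r=1}^{2d}\Bigl[\Tr(A D_r \rho D_r^*) - \Tr(A\rho)\,\Tr(D_r \rho D_r^*)\Bigr].
\]
The terms proportional to $\Tr(A\rho)$ collapse to $-\Tr(A\rho)\,\Tr(\M(\rho))$ where $\M$ is as in the paper; since $\M$ preserves the trace this vanishes, leaving $\mathcal{A} f_A(\rho) = \Tr\bigl(A\, \mathcal{L}(\rho)\bigr)$. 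Integrating against $\pi$ yields $\Tr\bigl(A\, \mathcal{L}(\bar\pi)\bigr) = 0$ for every $A$, where $\bar\pi = \int \rho \, d\pi(\rho) \in \mathcal{S}_{\mathcal{H}}$. Hence $\mathcal{L}(\bar\pi) = 0$, and by $(H1)$ every invariant measure has the common barycenter $\bar\pi = \rho_{inv}$.

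Combining these two steps, I would decompose $\pi$ into ergodic components---each again invariant and hence with barycenter $\rho_{inv}$---and apply Birkhoff's ergodic theorem to the finitely many linear functionals $f_A$ as $A$ ranges over a basis of $\Bb(\Hb)$. This gives, for $\pi$-almost every initial $\rho_0$, the matrix-valued convergence $\tfrac{1}{t}\int_0^t \rho_s\, ds \to \rho_{inv}$ almost surely.

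The main obstacle is that this argument a priori covers only initial laws that are absolutely continuous with respect to some invariant measure, whereas the statement requires convergence for \emph{every} deterministic initial condition $\omega^{(0)}$. This last gap is exactly what the pathwise quantum ergodic theorem of \cite{MR2106330} handles: under $(H1)$ the contractive structure of the Lindblad semigroup on $\ker(\mathcal L)^\perp$ together with a martingale argument on the quantum trajectory forces the Cesàro time averages to converge almost surely to $\rho_{inv}$ from any starting point in $\mathcal{S}_\Hb$, completing the proof.
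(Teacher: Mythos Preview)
The paper does not prove this theorem at all: it is stated as a particular case of the ergodic theorem in \cite{MR2106330}, with no argument given. So there is no ``paper's own proof'' to compare your attempt against; the paper simply cites the reference.

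Your proposal contains more than the paper offers. The observation that $(\rho_t)$ is an autonomous Feller process on the compact set $\mathcal{S}_\Hb$, the Krylov--Bogolyubov step, and the barycenter computation $\mathcal{A}f_A(\rho)=\Tr\bigl(A\,\mathcal{L}(\rho)\bigr)$ forcing $\bar\pi=\rho_{inv}$ under $(H1)$ are all correct and give good intuition (the map you call $\M$ should be $\mathcal{L}$, but that is cosmetic). However, as you yourself acknowledge, Birkhoff together with an ergodic decomposition only yields the convergence for $\pi$-almost every initial state, whereas the theorem asserts it for \emph{every} deterministic initial condition; note also that your barycenter argument does not even rule out several distinct invariant measures for the nonlinear process $(\rho_t)$, all sharing the same barycenter $\rho_{inv}$. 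You close this gap by appealing directly to \cite{MR2106330}---which is exactly, and only, what the paper does. In the end your first three paragraphs are (correct) motivation rather than a self-contained proof, and your final paragraph is the paper's citation restated; neither you nor the paper proves the statement independently.
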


 Now, our strategy to show the CLT consists in reducing the problem to a CLT for martingales with the help of the solution of the Poisson equation. To this end let us introduce the generator of the process $(\rho_t,X_t)_{t\geq 0}$. 

We denote $\A$ the Markov generator of the process $(\rho_t,X_t)_{t\geq 0}$ and $\mathbb{D}(\A)$ its domain. For all $f\in\mathbb{D}(\A)$, $\rho\in\mathcal S_{\mathcal H}$ and $x\in\mathbb Z^d$, we get 
\begin{eqnarray}
 \A f(\rho,x)&=&\Der_{\rho}f(\F(\rho))\nonumber\\ &&+\sum_{r=1}^{2d}\left[f\left(\frac{D_r\rho D_r^*}{\Tr(D_r\rho D_r^*)},x+e_r\right)-f(\rho,x)\right]\,\Tr(D_r\rho D_r^*)  
\end{eqnarray}
where $\F(\rho)=D_0\rho+\rho  D_0^*-\rho\Tr(D_0\rho+\rho D_0^*)$ for all $\rho\in\mathcal S_{\mathcal H}$ and where $\Der_{\rho}f$ denotes the partial differential of $f$ with respect to $\rho$. \\

\noindent\textbf{Remark:} Note that in the sequel we do not need to make precise the exact domain of $\A$. Actually we shall apply the Markov generator on $\mathcal C^1$ functions.\\

%

We shall also need the following quantity,
$$m=\sum_{r=1}^{2d}\Tr(D_r\rho_{inv} D_r^*)e_r \, .$$
The following lemma shall be used in the proof.

\begin{lem}
	For all $u\in \R^d$, the equation
	\begin{equation}\label{1lemd}
	\mathcal L^*(J_u)=-\Big(\sum_{r=1}^{2d}(e_r.u)D_r^*D_r-(m.u)I\Big)
	\end{equation}
	admits a solution and the difference between any couple of solutions of (\ref{1lemd}) is a multiple of the identity. 
	\end{lem}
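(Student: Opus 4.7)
The plan is to recognise this as a Fredholm alternative statement for the Lindblad generator $\mathcal L$ acting on the finite dimensional space $\mathcal B(\Hb)$ equipped with the Hilbert--Schmidt inner product $\langle A,B\rangle = \Tr(A^*B)$. Under this inner product, $\mathrm{range}(\mathcal L^*) = (\ker\mathcal L)^{\perp}$, so everything reduces to identifying $\ker\mathcal L$ and $\ker\mathcal L^*$ and then checking that the right-hand side lies in the orthogonal complement of $\ker\mathcal L$.

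First I would compute the explicit form of the adjoint, namely $\mathcal L^*(A)=D_0^*A+AD_0+\sum_{r=1}^{2d}D_r^*AD_r$, and observe that $\mathcal L^*(I)=D_0+D_0^*+\sum_{r=1}^{2d}D_r^*D_r=0$, which is just the trace-preservation property $\Tr(\mathcal L(\rho))=0$ written dually. Combined with hypothesis $(H1)$, which gives $\dim\ker\mathcal L=1$ and hence (by equality of dimensions of kernel and cokernel on a finite dimensional space) $\dim\ker\mathcal L^*=1$, we conclude $\ker\mathcal L^* = \C\, I$. This immediately yields the uniqueness part: if $J_u$ and $J_u'$ both satisfy \eqref{1lemd}, then $\mathcal L^*(J_u-J_u')=0$, so $J_u-J_u'$ is a multiple of the identity.

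For existence, by the Fredholm alternative the right-hand side $Y := -\sum_{r=1}^{2d}(e_r\cdot u)D_r^*D_r+(m\cdot u)I$ lies in $\mathrm{range}(\mathcal L^*)$ if and only if $\langle \rho_{inv},Y\rangle=\Tr(\rho_{inv}Y)=0$, since $\ker\mathcal L$ is spanned by the self-adjoint $\rho_{inv}$. Computing directly,
\begin{equation*}
\Tr(\rho_{inv}Y)=-\sum_{r=1}^{2d}(e_r\cdot u)\Tr(D_r\rho_{inv}D_r^*)+(m\cdot u)\Tr(\rho_{inv}),
\end{equation*}
and using $\Tr(\rho_{inv})=1$ together with the definition $m=\sum_{r=1}^{2d}\Tr(D_r\rho_{inv}D_r^*)\,e_r$, the right-hand side is zero. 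Hence $Y\in\mathrm{range}(\mathcal L^*)$ and a solution $J_u$ exists.

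There is no real obstacle here; the only subtlety is ensuring that one works with $\mathcal L^*$ (and not $\mathcal L$) on the correct side of the Fredholm alternative, and remembering that the invariant direction of $\mathcal L^*$ is $I$ (via trace-preservation) while the invariant direction of $\mathcal L$ is $\rho_{inv}$ (via $(H1)$). The orthogonality condition one must check is thus against $\rho_{inv}$, and the very definition of $m$ is precisely what makes this compatibility hold.
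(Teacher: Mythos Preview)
Your proof is correct and follows essentially the same route as the paper: both check that the right-hand side is orthogonal to $\rho_{inv}$ (via the definition of $m$) to obtain existence from $\ker(\mathcal L)^\perp=\im(\mathcal L^*)$, and both use $\dim\ker\mathcal L=1$ together with $\mathcal L^*(I)=0$ to conclude that any two solutions differ by a multiple of the identity. The only differences are cosmetic---you frame things explicitly as a Fredholm alternative with the Hilbert--Schmidt inner product and treat uniqueness first, whereas the paper treats existence first and leaves the linear-algebraic background implicit.
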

	
	\begin{proof}
		
	First, let us remark that
			\[\Tr\left(\rho_{inv}\left(\sum_{r=1}^{2d}(e_r.u)D_r^*D_r-(m.u)I\right)\right)=\sum_{r=1}^{2d}\Tr(D_r\rho_{inv} D_r^*)(e_r.u) - (m.u) \Tr(\rho_{inv})=0 \, ,\]
	which implies that $-\left(\displaystyle{\sum_{r=1}^{2d}}(e_r.u)D_r^*D_r-(m.u)I\right)\in\{\rho_{inv}\}^\perp$.
			But by hypothesis, we have $\{\rho_{inv}\}^\perp=\ke(\mathcal L)^\perp$. Moreover, since $\ke(\mathcal L)^\perp=\im(\mathcal L^*)$, we finally get that
			\[-\left(\sum_{r=1}^{2d}(e_r.u)D_r^*D_r-(m.u)I\right)\in\im(\mathcal L^*)\]
			which proves the existence of the lemma. Now we prove the second part. To this end consider $J_u$ and $J_u'$ two solutions of (\ref{1lemd}) and set $H_u=J_u-J_u'$. It is then clear that
			\[\mathcal L^*(H_u)=0 \, .\]
			Therefore $H_u\in\ke(\mathcal L^*)$. Since $\dim\ke(\mathcal L)=1$, we get $\dim\ke(\mathcal L^*)=1$ and since $\mathcal L^*(I)=D_0^*+D_0+\sum_{r=1}^{2d}D_r^*D_r=0$, the operator $H_u$ is necessarily a multiple of the identity.	
		\end{proof}
	
From now on, for $u\in\R^d$, we denote $J_u$ the unique solution of (\ref{1lemd}) such that $\Tr(J_u)=0$. Moreover, if $u=e_r$, then we simply write $J_u=J_r$. Using the linearity of $\mathcal L^*$, one can notice that:
	\[J_u=\sum_{r=1}^{d}u_r J_r \,, \]
	for all $u=(u_1,\ldots,u_d)\in\mathbb R^d.$
	
	The next lemma concerns the Poisson equation in our context (see \cite{MR1887206} for more details on the Poisson equation). 
	
	\begin{lem}\label{lempoid} For all $(\rho,x)\in\mathbb{S}\times\Z^d$ and $u\in\R^d$, let set
	\begin{equation}
			f_u(\rho,x)= \Tr(\rho J_u)+x.u \, .
			\end{equation}
			Then $f_u$ is solution of the Poisson equation:
	\begin{equation}\label{poid}
	\A f_u(\rho,x)=m.u \, .
	\end{equation}

	\end{lem}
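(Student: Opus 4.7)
The plan is to compute $\A f_u(\rho,x)$ directly from the expression of the generator and then recognize the defining equation of $J_u$ inside the result. Since $f_u(\rho,x) = \Tr(\rho J_u) + x\cdot u$ is linear in $\rho$ and affine in $x$, the calculation is algebraic and does not require any regularity argument beyond noting that $f_u \in \mathcal{C}^1$, which is all we need to apply $\A$.

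First I would handle the deterministic (drift) term. Linearity of the trace gives $\Der_\rho f_u(\F(\rho)) = \Tr(\F(\rho)J_u)$, and expanding $\F(\rho)=D_0\rho+\rho D_0^*-\rho\Tr(D_0\rho+\rho D_0^*)$ and using cyclicity yields
\[
\Der_\rho f_u(\F(\rho)) = \Tr\!\bigl(\rho(D_0^* J_u + J_u D_0)\bigr) - \Tr(\rho J_u)\,\Tr(D_0\rho+\rho D_0^*).
\]
Next I would compute the jump term. Since $x\cdot u$ shifts to $(x+e_r)\cdot u$, each summand expands to
\[
\Tr(D_r\rho D_r^* J_u) - \Tr(\rho J_u)\,\Tr(D_r\rho D_r^*) + (e_r\cdot u)\,\Tr(D_r\rho D_r^*),
\]
after multiplying by $\Tr(D_r\rho D_r^*)$ and cancelling the denominator coming from the updated state.

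The key simplification is trace preservation of $\mathcal L$: since $D_0 = -\ii H - \tfrac12\sum_r D_r^*D_r$, one has $\mathcal L^*(I)=0$, equivalently $\Tr(D_0\rho+\rho D_0^*) + \sum_r \Tr(D_r\rho D_r^*) = 0$. This makes all the terms with factor $\Tr(\rho J_u)$ cancel exactly between the drift part and the jump part. Using cyclicity of the trace on the remaining jump terms, $\sum_r \Tr(D_r \rho D_r^* J_u) = \Tr(\rho \sum_r D_r^* J_u D_r)$, I obtain
\[
\A f_u(\rho,x) = \Tr\!\Bigl(\rho\bigl[D_0^* J_u + J_u D_0 + \sum_{r=1}^{2d} D_r^* J_u D_r\bigr]\Bigr) + \sum_{r=1}^{2d}(e_r\cdot u)\,\Tr(D_r\rho D_r^*).
\]
The bracket is precisely $\mathcal L^*(J_u)$, and the last sum rewrites as $\Tr\!\bigl(\rho\sum_{r}(e_r\cdot u)D_r^*D_r\bigr)$. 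Applying the defining relation $\mathcal L^*(J_u) = -\sum_r (e_r\cdot u) D_r^*D_r + (m\cdot u)I$ from equation (\ref{1lemd}), the $D_r^*D_r$-terms cancel and we are left with $\Tr(\rho\,(m\cdot u)\,I) = m\cdot u$, which is the Poisson equation \eqref{poid}.

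There is no genuine obstacle here; the only point requiring a little care is keeping track of the sign from $\Tr(D_0\rho+\rho D_0^*) = -\sum_r \Tr(D_r\rho D_r^*)$, which is exactly what allows the $\Tr(\rho J_u)$ terms to disappear and lets the remaining expression be identified with $\mathcal L^*(J_u)$.
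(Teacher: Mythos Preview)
Your proof is correct and follows essentially the same route as the paper: a direct computation of $\A f_u$ followed by recognition of $\mathcal L^*(J_u)$ and application of the defining equation~\eqref{1lemd}. The only difference is cosmetic—you make explicit the cancellation of the $\Tr(\rho J_u)$ terms via $\mathcal L^*(I)=0$, whereas the paper passes over that step silently.
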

	\begin{proof}
		For all $(\rho,x)\in\mathbb{S}\times\Z^d$ and $u\in\R^d$, we complete the following computation:
		\[\begin{array}{lll}
		\A f_u(\rho,x)&=& \Tr(\F(\rho) J_u)\\
		\\
		&&+\sum\limits_{r=1}^{2d}\left[\Tr\Bigg(\frac{D_r\rho D_r^*}{\Tr(D_r\rho D_r^*)}J_u\Bigg)+x.u+e_r.u-\Tr(\rho J_u)-x.u\right]\Tr(D_r\rho D_r^*)\\
		\\
		&=&\Tr\Big(D_0\rho J_u+\rho D_0^* J_u-\Tr(D_0\rho+\rho D_0^*)\rho J_u\Big)\\
		&&+\sum\limits_{r=1}^{2d}\Tr({D_r\rho D_r^*}J_u)+\Tr(D_r\rho D_r^*)(e_r.u)-\Tr\Big(\Tr(D_r\rho D_r^*)\rho J_u\Big)\\
		&=&\Tr\Bigg(\rho\Bigg[J_u D_0+D_0^* J_u+\sum\limits_{r=1}^{2d}D_r^*J_uD_r+\sum\limits_{r=1}^{2d}D_r^* D_r(e_r.u)\Bigg]\Bigg)\\
		&=&\Tr\Bigg(\rho\Bigg[\mathcal L^*(J_u)+\sum\limits_{r=1}^{2d}D_r^* D_r(e_r.u)\Bigg]\Bigg)\\
		&=&\Tr\Big((m.u)\rho\Big)\\
		&=& m.u \, ,
		\end{array}
	\]
		so $f_u$ is solution of the Poisson equation (\ref{poid}).
		\end{proof}
		
		Now we have found the solution of the Poisson equation, we express the CLT for martingales that we shall use.
		
		\begin{thm}[\cite{crimaldi2005convergence}]\label{tclmd}
	Let $(M_t)_{t\geq0}$ be a real, càdlàg, and square integrable martingale. Suppose the following conditions:
			\begin{equation}\label{eqt1d}
			\lim\limits_{t \to \infty}\E\left(\frac{1}{\sqrt{t}}\sup_{0\leq s \leq t}|\Delta M_s|\right)=0
			\end{equation}
			and
			\begin{equation}\label{eqt2d}
			\lim\limits_{t \to \infty}\dfrac{[M,M]_t}{t}=\sigma^2
			\end{equation}
			for some $\sigma\geq 0$, then
			\[\dfrac{M_t}{\sqrt{t}} \underset{t \to +\infty}{\overset{\Lin}{\longrightarrow}}\mathcal{N}(0,\sigma^2) \, . \]
	\end{thm}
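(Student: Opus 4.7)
The plan is to establish convergence in distribution by proving pointwise convergence of characteristic functions. For each fixed $u\in\R$ I would show that
\[\phi_t(u):=\E\bigl[\exp(\ii u M_t/\sqrt{t})\bigr]\underset{t\to\infty}{\longrightarrow}\exp(-u^2\sigma^2/2),\]
and conclude via Lévy's continuity theorem. Boundedness of $|\exp(\ii u x)|=1$ makes all dominated-convergence arguments routine, so the real work lies in extracting the Gaussian exponent from the martingale structure of $M$.

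The key tool is Itô's formula for càdlàg semimartingales, applied to $s\mapsto \exp(\ii u M_s/\sqrt{t})$ on $[0,t]$ with $t$ fixed. This yields
\[e^{\ii u M_t/\sqrt{t}}=1+\frac{\ii u}{\sqrt t}\int_0^t e^{\ii u M_{s-}/\sqrt t}\,\der M_s-\frac{u^2}{2t}\int_0^t e^{\ii u M_{s-}/\sqrt t}\,\der[M,M]_s+R_t,\]
where $R_t$ collects the second-order Taylor errors at the jumps of $M$. Taking expectations kills the stochastic integral (bounded integrand against a square-integrable martingale). Hypothesis (\ref{eqt2d}), combined with $|e^{\ii u x}|=1$ and dominated convergence, allows one to replace $\der[M,M]_s/t$ by $\sigma^2\,\der s/t$ asymptotically, producing an ODE-like identity for $\phi_t(u)$. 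Iterating this identity over a fine subdivision of $[0,t]$, or equivalently comparing $e^{\ii u M_t/\sqrt t}$ with the stochastic exponential $\mathcal E(\ii u M/\sqrt t)_t$ (which is a local martingale of mean $1$), exponentiates the relation and yields the Gaussian limit $e^{-u^2\sigma^2/2}$.

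The main obstacle is controlling the jump remainder $R_t$. A crude Taylor bound gives
\[|R_t|\leq \frac{C|u|^3}{t^{3/2}}\sum_{s\leq t}|\Delta M_s|^3\leq \frac{C|u|^3}{\sqrt{t}}\Bigl(\sup_{s\leq t}|\Delta M_s|\Bigr)\cdot\frac{[M,M]_t}{t}.\]
Here hypothesis (\ref{eqt1d}) drives the first factor to $0$ in $L^1$, while hypothesis (\ref{eqt2d}) keeps $[M,M]_t/t$ bounded in probability (and in $L^1$ after a uniform integrability argument based on the martingale property of $M^2-[M,M]$). A Cauchy--Schwarz step, or more cleanly a Lindeberg-type truncation splitting the jumps into small and large parts, then combines these into $\E|R_t|\to 0$. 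I expect this jump-remainder control, together with the bootstrap from the "first-order" identity to the sharp exponential limit, to be the most delicate part, since (\ref{eqt1d}) only controls the $L^1$-norm of the maximal jump rather than providing any pathwise bound.
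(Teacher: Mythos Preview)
The paper does not prove this theorem at all: Theorem~\ref{tclmd} is stated with the attribution \cite{crimaldi2005convergence} and then invoked as a black box in the proof of the main CLT (Theorem~\ref{tcld}). There is therefore no ``paper's own proof'' against which to compare your proposal.

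For what it is worth, your sketch follows the classical characteristic-function route to martingale CLTs (It\^o's formula, stochastic exponential, control of the jump remainder via the maximal-jump hypothesis), which is indeed the line of argument underlying results of this type. Two places in your outline would require more care in a full write-up: first, hypothesis~(\ref{eqt2d}) is only almost-sure convergence of $[M,M]_t/t$, so your claim that it is ``bounded in $L^1$ after a uniform integrability argument'' is not automatic, and the Cauchy--Schwarz step you propose for $\E|R_t|$ would in fact need second moments you do not have; the usual fix is a truncation (Lindeberg) argument rather than a direct moment bound. Second, the passage you describe as ``iterating the identity'' or ``comparing with the stochastic exponential'' is the genuine heart of the proof and is not a routine bootstrap; one typically shows that $e^{\ii u M_t/\sqrt t}\,\mathcal E(\ii u M/\sqrt t)_t^{-1}$ converges to $1$ in probability and separately that $\mathcal E(\ii u M/\sqrt t)_t\to e^{-u^2\sigma^2/2}$, and both steps need the jump control in an essential way. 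But again, none of this is in the paper: the authors simply quote the result.
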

	
	We shall also use the following lemma which is a straightforward consequence of the law of large numbers for martingales (see \cite{MR1725357}).
	
 	\begin{lem}\label{lfgnmd}
 	Let $Z_t$ a  real, càdlàg, and square integrable martingale which satisfies $\langle Z,Z\rangle_t \leq Kt$ for a constant $K$, then \[\frac{Z_t}{t}\overset{a.s.}\longrightarrow 0 \, .\]
 	\end{lem}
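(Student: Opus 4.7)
The plan is to reduce the almost-sure convergence of $Z_t/t$ to convergence along the dyadic subsequence $t_n=2^n$, then control the oscillations of $Z$ on each interval $[2^n,2^{n+1}]$ via Doob's maximal inequality. The single input I use about $Z$ is the identity $\E[Z_t^2]=\E[\langle Z,Z\rangle_t]\le Kt$, which follows from the definition of the predictable quadratic variation for a square integrable martingale, together with the martingale increments identity $\E[(Z_t-Z_s)^2]=\E[\langle Z,Z\rangle_t-\langle Z,Z\rangle_s]\le K(t-s)$.

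First, I would show that $Z_{2^n}/2^n\to 0$ almost surely. Indeed $\E[(Z_{2^n}/2^n)^2]\le K/2^n$, so for any $\varepsilon>0$, Markov's inequality yields $\Pro(|Z_{2^n}/2^n|>\varepsilon)\le K/(\varepsilon^2 2^n)$, which is summable. Borel--Cantelli then gives the claim. Second, I would bound the fluctuations on each dyadic block. Since $(Z_t-Z_{2^n})_{t\ge 2^n}$ is a càdlàg square integrable martingale, Doob's $L^2$-inequality yields
\[
\E\!\left[\sup_{2^n\le t\le 2^{n+1}}(Z_t-Z_{2^n})^2\right]\le 4\,\E[(Z_{2^{n+1}}-Z_{2^n})^2]\le 4K\cdot 2^n.
\]
Thus $\E\!\left[(\sup_{2^n\le t\le 2^{n+1}}|Z_t-Z_{2^n}|/2^n)^2\right]\le 4K/2^n$, and the same Markov plus Borel--Cantelli argument gives $\sup_{2^n\le t\le 2^{n+1}}|Z_t-Z_{2^n}|/2^n\to 0$ almost surely.

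Combining both steps, for every $t\in[2^n,2^{n+1}]$ one has
\[
\left|\frac{Z_t}{t}\right|\le \frac{|Z_t-Z_{2^n}|}{2^n}+\frac{|Z_{2^n}|}{2^n}\le \frac{\sup_{2^n\le s\le 2^{n+1}}|Z_s-Z_{2^n}|}{2^n}+\frac{|Z_{2^n}|}{2^n},
\]
and as $t\to\infty$ the corresponding $n\to\infty$, so both terms on the right vanish almost surely, yielding $Z_t/t\to 0$ a.s.

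I do not anticipate a serious obstacle: the argument is a textbook consequence of Doob's inequality and Borel--Cantelli, exactly as suggested by the reference. The only point where a small care is needed is to check that Doob's maximal inequality applies in the càdlàg setting (it does, since càdlàg square integrable martingales satisfy $\E[\sup_{s\le t}M_s^2]\le 4\E[M_t^2]$), and to remember that the bound $\langle Z,Z\rangle_t\le Kt$ transfers to increments, which is where the hypothesis is genuinely used.
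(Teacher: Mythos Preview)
Your argument is correct. The paper itself does not prove this lemma: it merely records it as ``a straightforward consequence of the law of large numbers for martingales'' and cites a reference. So there is no approach to compare against beyond that black-box appeal; what you have written is precisely the standard elementary proof of this special case of the martingale strong law, via dyadic times, Doob's $L^2$-maximal inequality, and Borel--Cantelli.

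One very minor remark: the identity $\E[Z_t^2]=\E[\langle Z,Z\rangle_t]$ holds as written only when $Z_0=0$. In general $\E[Z_t^2]=\E[Z_0^2]+\E[\langle Z,Z\rangle_t]\le \E[Z_0^2]+Kt$, but this harmless additive constant disappears after dividing by $4^n$, so your Borel--Cantelli step is unaffected. The increment bound $\E[(Z_t-Z_s)^2]=\E[\langle Z,Z\rangle_t-\langle Z,Z\rangle_s]\le K(t-s)$, which is what you actually use for the oscillation control, is correct as stated.
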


The last lemma below shall be useful in this part as well as in the next one. From now on, we denote $|u|$ the Euclidean norm of $u\in\R^d$. 
\begin{lem}\label{majo}For 	all $t\geq0$ and all $u\in\R^d$, we have
\begin{align*}&\E\left[\sup_{0\leq s\leq t}|X_{s}-X_0|\right]\leq (2d)t\quad \mathrm{and}\\&\E\left[\sup_{0\leq s\leq t}e^{u.(X_{s}-X_0)}\right]\leq \exp\Big((2d)t(e^{|u|}-1)\Big)\, .
\end{align*}
\end{lem}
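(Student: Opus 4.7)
The plan is to exploit the explicit jump representation of $X_t$ from Proposition~\ref{omega}, namely
\[
X_t - X_0 = \sum_{r=1}^{2d} e_r\,\tilde N^r(t),
\]
where the $\tilde N^r$ are the counting processes defined in~\eqref{jump}. Setting $N(t) := \sum_{r=1}^{2d} \tilde N^r(t)$ for the total jump count, and using $|e_r|=1$ together with the monotonicity of $N$, we get $|X_s - X_0| \le N(s) \le N(t)$ for all $s \le t$, hence $\sup_{0\le s\le t}|X_s-X_0| \le N(t)$.

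The first inequality is then immediate. Each $\tilde N^r$ admits the predictable intensity $\Tr(D_r\rho_{s-}D_r^*)$, which is bounded by $1$ under the standing normalization of the $D_r$. Taking expectations gives $\E[\tilde N^r(t)] = \E\!\int_0^t \Tr(D_r\rho_{s-}D_r^*)\,\der s \le t$, and summing over $r$ yields $\E[\sup_{0\le s\le t}|X_s-X_0|] \le \E[N(t)] \le 2dt$.

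For the exponential bound I would again dominate the supremum by a monotone quantity. Since $u\cdot(X_s-X_0) \le |u|\,|X_s-X_0| \le |u|N(s)$, we have $\sup_{0\le s\le t} e^{u\cdot(X_s-X_0)} \le e^{|u|N(t)} =: Z(t)$. The process $Z$ is piecewise constant with multiplicative jumps $Z(s) = Z(s-)\,e^{|u|}$ at each jump time of some $\tilde N^r$, so pathwise
\[
Z(t) = 1 + (e^{|u|}-1)\sum_{r=1}^{2d}\int_0^t Z(s-)\,\der \tilde N^r(s).
\]
Introducing the stopping times $\tau_n := \inf\{t\ge 0 : N(t)\ge n\}$ to ensure integrability and to cancel the martingale parts of the $\tilde N^r$, then taking expectations and using $\Tr(D_r\rho_{s-}D_r^*)\le 1$, one obtains
\[
\E\bigl[Z(t\wedge\tau_n)\bigr] \le 1 + 2d(e^{|u|}-1)\int_0^t \E\bigl[Z(s\wedge\tau_n)\bigr]\,\der s.
\]
Gronwall's inequality then gives $\E[Z(t\wedge\tau_n)] \le \exp\!\bigl(2dt(e^{|u|}-1)\bigr)$, and Fatou's lemma as $n\to\infty$ yields the claimed bound.

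The main subtlety is the integrability issue in the exponential estimate: $Z(t)$ is a priori only almost surely finite, so the localization by $(\tau_n)$ is genuinely needed to justify taking expectations and applying Gronwall before passing to the limit. Once this is in place, the rest is a direct computation from the jump structure of the Poisson processes $\tilde N^r$.
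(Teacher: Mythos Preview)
Your argument is correct, but the paper takes a shorter route for the exponential bound. Instead of writing a pathwise SDE for $Z(t)=e^{|u|N(t)}$ and running a localization/Gronwall/Fatou argument, the paper goes back to the underlying Poisson point processes $N^r$ on $\R^2$: from $\ind_{0<y<\Tr(D_r\rho_{s-}D_r^*)}\le \ind_{0<y<1}$ one gets the pathwise domination
\[
\sum_{r=1}^{2d}\tilde N^r(t)\le \sum_{r=1}^{2d}\int_0^t\!\!\int_0^1 N^r(\der y,\der s),
\]
and the right-hand side is a sum of $2d$ \emph{independent} rate-$1$ Poisson processes. The bound $\E\big[e^{|u|N(t)}\big]\le \big(\E\big[e^{|u|\,\mathrm{Poi}(t)}\big]\big)^{2d}=\exp\!\big(2dt(e^{|u|}-1)\big)$ then follows directly from the Poisson moment generating function, with no stopping times or Gronwall needed. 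Your approach has the minor advantage of staying entirely at the level of the counting processes $\tilde N^r$ without invoking the point-process construction, at the cost of a longer argument. One small caveat: you justify the intensity bound $\Tr(D_r\rho D_r^*)\le 1$ by invoking ``the standing normalization of the $D_r$'', but no such normalization is stated in the paper; the paper's proof uses the very same inequality tacitly, so your proof is on equal footing with it, but be aware this is an unstated assumption in both.
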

	\begin{proof}
	Let $t\geq0$ and $u\in\R^d$,
	\[\begin{array}{lll}
	\E\left[\sup_{0\leq s\leq t}e^{u.(X_{s}-X_0)}\right]&\leq&\E\Big[e^{|u|\int_{0}^{t}|\der X_s|}\Big]\\
	&\leq&\E\left[\exp\left(|u|\sum_{r=1}^{2d} \, \int_{ v=0}^t\int_{ y\in\R} \ind_{0<y<\Tr(D_r^{}\rho_{v}D_r^*)} N^r(\der y, \der v)\right)\right] \\
	\\
	
	&\leq& \E\left[\exp\left(|u|\sum_{r=1}^{2d} \, \int_{ 0}^t\int_{ y=0}^1  N^r(\der y, \der v)\right)\right] \, .
	\end{array}\]
	Since $N^r$ are independent Poisson point processes on $\R^2$, we get
	\[\begin{array}{lll}
\E\left[\sup_{0\leq s\leq t}e^{u.(X_{s}-X_0)}\right] &\leq&\E\left[\exp\left(|u| \int_{ 0}^t\int_{ y=0}^1\,  N^1(\der y, \der v)\right)\right]^{2d}\\
	&\leq&\exp\Big((2d)t(e^{|u|}-1)\Big) \, .
	
	\end{array}\]
	In the same way, one can prove that \[\E\left[\sup_{0\leq s\leq t}|X_{s}-X_0|\right]\leq\E\left[\sum_{r=1}^{2d} \, \int_{ v=0}^t\int_{ y=0}^1  N^r(\der y, \der v)\right]\leq (2d)t \, .\] \end{proof}
	
	Now, we are in the position to state the main result of this section. 
	\clearpage	
	\begin{thm}\label{tcld}
	Assume $(H_1)$ holds. Let $(\rho_t,X_t)_{t\geq 0}$ the Markov process defined in Proposition \ref{omega} then \[\dfrac{X_t-tm}{\sqrt{t}} \underset{t \to +\infty}{\overset{\Lin}{\longrightarrow}}\mathcal{N}(0,V) \, , \]
		where $V\in\mathcal{M}_d(\R)$ such that for all $r,q\in\{1,...,d\}$, \[\begin{array}{lll}V_{rq}&=&-m_q\Tr(\rho_{inv}J_r)-m_r\Tr(\rho_{inv}J_q)\\
		&&+\delta_{rq}\big(\Tr(D_r\rho_{inv}D_r^*)+\Tr(D_{r+d}\rho_{inv}D_{r+d}^*)\big)\\
		&&+\Tr(D_q\rho_{inv}D_q^*J_r)+\Tr(D_r\rho_{inv}D_r^*J_q)\\
		&&-\Tr(D_{q+d}\rho_{inv}D_{q+d}^*J_r)-\Tr(D_{r+d}\rho_{inv}D_{r+d}^*J_q)\, .\end{array}\]
	\end{thm}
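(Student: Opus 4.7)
By the Cramér--Wold device it is enough to prove, for every $u\in\R^d$, the one--dimensional convergence $(u\cdot X_t-t\,m\cdot u)/\sqrt{t}\overset{\Lin}{\longrightarrow}\mathcal N(0,\sigma_u^2)$ with $\sigma_u^2:=\sum_{r,q=1}^d u_r u_q V_{rq}$. The natural martingale to exploit comes from pairing Dynkin's formula with the solution of the Poisson equation: for $f_u(\rho,x):=\Tr(\rho J_u)+u\cdot x$, Lemma~\ref{lempoid} makes
$$M_t^u\;:=\;f_u(\rho_t,X_t)-f_u(\rho_0,X_0)-t(m\cdot u)$$
a square--integrable martingale whose jumps are uniformly bounded by $2\|J_u\|+|u|$. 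Since $|\Tr(\rho_tJ_u)|\le\|J_u\|$, we have $u\cdot X_t-t\,m\cdot u=M_t^u+O(1)$, which reduces the problem to a martingale CLT for $M_t^u/\sqrt{t}$ via Theorem~\ref{tclmd}.

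The Lindeberg--type hypothesis \eqref{eqt1d} is immediate from the bounded--jump property. The quadratic variation condition \eqref{eqt2d} is the heart of the matter: writing $[M^u,M^u]_t$ as the Poisson--driven sum of squared jumps and compensating, Lemma~\ref{lfgnmd} (applied to the compensation martingale, whose angle bracket is bounded by a constant times $t$) reduces the task to proving
$$\frac{1}{t}\int_0^t\Psi_u(\rho_s)\,ds\;\overset{\mathrm{a.s.}}{\longrightarrow}\;\sigma_u^2,\qquad \Psi_u(\rho):=\sum_{r=1}^{2d}\Tr(D_r\rho D_r^*)\Bigl(\tfrac{\Tr(D_r\rho D_r^*J_u)}{\Tr(D_r\rho D_r^*)}-\Tr(\rho J_u)+e_r\cdot u\Bigr)^{\!2}.$$

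The main obstacle is that $\Psi_u$ is \emph{nonlinear} in $\rho$ (ratios and products of traces appear), so the ergodic theorem~\ref{ergod} --- which only controls time averages of \emph{linear} functionals of $\rho_s$ --- does not apply directly. The resolution is to introduce the bounded auxiliary function $g(\rho,x):=(\Tr(\rho J_u))^2$ and establish the algebraic identity
$$\Psi_u(\rho)\;=\;\A g(\rho,x)\;+\;\Phi_u(\rho),$$
where
$$\Phi_u(\rho):=-2(m\cdot u)\Tr(\rho J_u)+2\sum_{r=1}^{2d}(e_r\cdot u)\Tr(D_r\rho D_r^*J_u)+\sum_{r=1}^{2d}(e_r\cdot u)^2\Tr(D_r\rho D_r^*)$$
is \emph{linear} in $\rho$. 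This identity is verified by a direct computation of $\A g$ in which the Poisson relation $\mathcal L^*(J_u)=-\sum_{r=1}^{2d}(e_r\cdot u)D_r^*D_r+(m\cdot u)I$ is used to rewrite a term $\Tr(\rho\,\mathcal L^*(J_u))$ that arises. A final Dynkin identity applied to the bounded function $g$ combined with Lemma~\ref{lfgnmd} then yields $\tfrac1t\int_0^t\A g(\rho_s,X_s)\,ds\to 0$ almost surely, while Theorem~\ref{ergod} gives $\tfrac1t\int_0^t\Phi_u(\rho_s)\,ds\to\Phi_u(\rho_{inv})$. Expanding $\Phi_u(\rho_{inv})$ and grouping the $r\in\{1,\dots,d\}$ and $r+d\in\{d+1,\dots,2d\}$ summands via $e_{r+d}=-e_r$ and $m=\sum_{r=1}^{2d}\Tr(D_r\rho_{inv}D_r^*)e_r$ recovers $\sigma_u^2=\sum_{r,q}u_ru_qV_{rq}$ with the $V$ given in the statement, and Cramér--Wold concludes.
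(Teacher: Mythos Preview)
Your proposal is correct and follows essentially the same route as the paper: the Poisson-equation martingale $M_t^u$ built from $f_u(\rho,x)=\Tr(\rho J_u)+u\cdot x$, the bounded-jump verification of \eqref{eqt1d}, the auxiliary function $g(\rho)=\Tr(\rho J_u)^2$ to absorb the nonlinear part of the quadratic variation, and the ergodic theorem applied to the remaining linear functional $\Phi_u$ are exactly the ingredients the paper uses. Your packaging is slightly cleaner --- you compensate $[M^u,M^u]$ to $\int\Psi_u$ first and then invoke the identity $\Psi_u=\A g+\Phi_u$, whereas the paper expands $\der[M,M]_s$ by hand and regroups via the martingales $Y^r$, $S^h$, and $-2\Tr(\rho_{s-}J_u)\,\der M_s$ --- but the substance is identical.
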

	
		\noindent\textbf{Remark:} Proposition \ref{lien} implies then the CLT for the process $(Q_t)_{t\geq0}$ as it holds for $(X_t)_{t\geq0}$.
	\begin{proof}
	As announced, the proof is a combination of Lemma \ref{lempoid} and Theorem \ref{tclmd}. Let $u\in\R^d$ and $f_u$ the $\mathcal C^1$ function defined in Lemma \ref{lempoid}. Since $\mathcal A$ is the generator of $(\rho_t,X_t)_{t\geq0}$, following the theory of problem of martingale, the process $(M_t)_{t\geq0}$ defined by
\begin{eqnarray}M_t&=&f_u(\rho_t,X_t)-f_u(\rho_0,X_0)-\int_{0}^{t}\A f_u(\rho_{s-},X_{s-})\der s \nonumber\\&=&\Tr(\rho_t J_u)-\Tr(\rho_0 J_u)+ X_t.u-X_0.u-(m.u)t \nonumber
\end{eqnarray}
is a local martingale with respect to the filtration $\F$ associated to $(\rho_t,X_t)_{t\geq0}$  (see \cite{MR1725357,ethier1986markov} for more details on problem of martingale). In order to apply Theorem \ref{tclmd}, we shall show that $(M_t)$ is a true martingale. To this end it is sufficient to show that $\E\Bigg(\sup_{0\leq s\leq t}|M_s|\Bigg) <\infty$ (see \cite{ethier1986markov} for more details). This way, since $|\Tr(\rho J_u)|\leq\|J_u\|_\infty$ for all $ \rho\in\Sb_\Hb$, one can check with the help of Lemma \ref{majo} that 
\[\E\Bigg(\sup_{0\leq s\leq t}|M_s|\Bigg)\leq 2\|J_u\|_\infty+2d|u|t+|m.u|t  \, .\] 

Now we shall see that $(M_t)$ fulfills the conditions of Theorem \ref{tclmd}. The first one is the easiest one. Indeed,
	\[
	|\Delta M_s| \leq |\Tr(\Delta\rho_s J_u)|+|\Delta X_s.u|\leq 2\|J_u\|_\infty + |u| \, .
	\]
This shows that $\Delta M_s$ is bounded independently of $s$ and thus the condition (\ref{eqt1d}) holds. Now, we check that $(M_t)$ satisfies Equation (\ref{eqt2d}). The bracket $[M,M]_t$ satisfies:
	\begin{eqnarray*}
\der[M,M]_s
	&=&\der[u.X,u.X]_s+2 \, \der[u.X,\Tr(\rho J_u)]_s+\der[\Tr(\rho J_u),\Tr(\rho J_u)]_s\\
	\\
	&=& \sum\limits_{r=1}^{2d}(e_r.u)^2\tilde{N}^r(\der s)
	+2\sum\limits_{r=1}^{2d}(e_r.u)\Tr\Bigg(\frac{D_r\rho_{s-}D_r^*}{\Tr(D_r\rho_{s-}D_r^*)}J_u\Bigg)\tilde{N}^r(\der s)\\
	\\&&-2\sum\limits_{r=1}^{2d}(e_r.u)\Tr(\rho_{s-}J_u)\tilde{N}^r(\der s)+\sum\limits_{r=1}^{2d}\Tr\Bigg(\frac{D_r\rho_{s-}D_r^*}{\Tr(D_r\rho_{s-}D_r^*)}J_u\Bigg)^2\tilde{N}^r(\der s)\\
	\\
	&&-2\sum_{r=1}^{2d}\Tr\Bigg(\frac{D_r\rho_{s-}D_r^*}{\Tr(D_r\rho_{s-}D_r^*)}J_u\Bigg)\Tr(\rho_{s-}J_u)\tilde{N}^r(\der s)+\sum\limits_{r=1}^{2d}\Tr(\rho_{s-}J_u)^2\tilde{N}^r(\der s) \,\\
	\\
	\\
	&=&\sum_{r=1}^{2d}\Bigg(\Tr\Bigg(\frac{D_r\rho_{s-}D_r^*}{\Tr(D_r\rho_{s-}D_r^*)}J_u\Bigg)^2 -\Tr(\rho_{s-}J_u)^2\Bigg)\tilde{N}^r(\der s)
			\\	 
			 		 &&-2\Tr(\rho_{s-}J_u)\sum_{r=1}^{2d}\left(\Tr\Bigg(\frac{D_r\rho_{s-}D_r^*}{\Tr(D_r\rho_{s-}D_r^*)}J_u\Bigg)-\Tr(\rho_{s-}J_u)+(e_r.u) \,\right) \tilde{N}^r(\der s)
			 		 \\	 			 		 
			 &&+\sum_{r=1}^{2d}(e_r.u)^2\tilde{N}^r(\der s)	+2\sum_{r=1}^{2d}(e_r.u)\Tr\Bigg(\dfrac{D_r\rho_{s-}D_r^*}{\Tr(D_r\rho_{s-}D_r^*)}J_u\Bigg)\tilde{N}^r(\der s) \, .
	\end{eqnarray*}
	
%
	
Now we shall make the martingales $Y^r(t)=\tilde{N}^r(t)-\int_{0}^{t}\Tr(D_r\rho_{s-}D_r^*)\der s$ appear in the first and the last term of the above expression. Concerning the second term, we recognize $\der\Tr(\rho_{s}J_u)$ and $\der (X_s.u)$ to get

\begin{eqnarray}\label{eee}	\der[M,M]_s&=
			&\sum_{r=1}^{2d}\left(\Tr\left(\frac{D_r\rho_{s-}D_r^*}{\Tr(D_r\rho_{s-}D_r^*)}J_u\right)^2 -\Tr(\rho_{s-}J_u)^2\right) Y^r(\der s)
				 		 \nonumber\\
		 &&+\sum_{r=1}^{2d}\left(\Tr\left(\frac{D_r\rho_{s-}D_r^*}{\Tr(D_r\rho_{s-}D_r^*)}J_u\right)^2 -\Tr(\rho_{s-}J_u)^2\right)\Tr(D_r\rho_{s-}D_r^*)\der s\
				 		 				\nonumber \\
		  &&-2\Tr(\rho_{s-}J_u)\Big(\der\Tr(\rho_{s}J_u)+\der (X_s.u)-\Tr(\F(\rho_{s-})J_u)\der s\Big)\nonumber\\
		 &&+\sum_{r=1}^{2d}(e_r.u)^2 Y^r(\der s) 	+2\sum_{r=1}^{2d}(e_r.u)\Tr\left(\frac{D_r\rho_{s-}D_r^*}{\Tr(D_r\rho_{s-}D_r^*)}J_u\right) Y^r(\der s)\nonumber\\
				 &&+\sum_{r=1}^{2d}(e_r.u)^2 \Tr(D_r\rho_{s-}D_r^*)\der s+	2\sum_{r=1}^{2d}(e_r.u)\Tr({D_r\rho_{s-}D_r^*}J_u) \der s \, .
\end{eqnarray}

One can remark that for $h(\rho,x):=\Tr(\rho J_u)^2$, we get for all $(\rho,x)\in\Sb_\Hb\times\Z^d$:
\begin{align}\label{genh}
\A  h(\rho,x)=& \, 2\Tr\big(\F(\rho)J_u\big)\Tr\big(\rho J_u\big) \nonumber\\
&+\sum_{r=1}^{2d}\left(\Tr\left(\frac{D_r\rho D_r^*}{\Tr(D_r\rho D_r^*)}J_u\right)^2 -\Tr(\rho J_u)^2\right)\Tr(D_r\rho D_r^*)
\, .\end{align}
Since $h\in\mathcal C^1$, the process $(S_t^h)_{t\geq 0}$ defined by:
\[S_t^h=\Tr(\rho_t J_u)^2-\Tr(\rho_0 J_u)^2-\int_0^t \A  h(\rho_{s-},X_{s-})\der s \] 
is a local martingale. Besides, since $|\Tr(\rho J_u)|\leq\|J_u\|_\infty$ for all $ \rho\in\Sb_\Hb$, one has
 \[\E\Bigg(\sup_{0\leq s\leq t}|S_t^h|\Bigg)\leq\alpha+\beta t <\infty\]
 for all $t\geq 0$, so $S_t^h$ is actually a true martingale.

 Now using Equation (\ref{genh}) in the second line of (\ref{eee}) and recognizing $\der M_t$ in the third one, we have
\begin{eqnarray}\label{brabra}	\der[M,M]_s&=
	&\sum_{r=1}^{2d}\Bigg(\Tr\Bigg(\frac{D_r\rho_{s-}D_r^*}{\Tr(D_r\rho_{s-}D_r^*)}J_u\Bigg)^2 -\Tr(\rho_{s-}J_u)^2\Bigg) Y^r(\der s)\nonumber\\
	&&+ \A  h(\rho_{s-},X_{s-})\der s-2\Tr\big(\F(\rho_{s-})J_u\big)\Tr\big(\rho_{s-}J_u\big)\der s\nonumber\\
	&&-2\Tr(\rho_{s-}J_u)\Big(\der M_s +(m.u)\der s-\Tr\big(\F(\rho_{s-})J_u\big)\der s\Big)\nonumber\\
	&&+\sum_{r=1}^{2d}(e_r.u)^2 Y^r(\der s) +	+2\sum_{r=1}^{2d}(e_r.u)\Tr\Bigg(\frac{D_r\rho_{s-}D_r^*}{\Tr(D_r\rho_{s-}D_r^*)}J_u\Bigg) Y^r(\der s)\nonumber\\
	&&+\sum_{r=1}^{2d}(e_r.u)^2 \Tr(D_r\rho_{s-}D_r^*)\der s	+2\sum_{r=1}^{2d}(e_r.u)\Tr({D_r\rho_{s-}D_r^*}J_u) \der s \nonumber\\
	&=&
 		\sum\limits_{r=1}^{2d}\Bigg(\Tr\Bigg(\frac{D_r\rho_{s-}D_r^*}{\Tr(D_r\rho_{s-}D_r^*)}J_u\Bigg)^2 -\Tr(\rho_{s-}J_u)^2+(e_r.u)^2\nonumber\\&&\hphantom{ccccc}+2(e_r.u)\Tr\Bigg(\frac{D_r\rho_{s-}D_r^*}{\Tr(D_r\rho_{s-}D_r^*)}J_u\Bigg)\Bigg)Y^r(\der s)\nonumber\\
 		&&-\der S_s^h-2\Tr(\rho_{s-}J_u)\der M_s+\der\big(\Tr(\rho_s J_u)^2\big)\nonumber\\
 		&&+\Tr\left(\rho_{s-}\left[-2(m.u)J_u+	\sum\limits_{r=1}^{2d}(e_r.u)^2D_r^*D_r+2\sum_{r=1}^{2d}(e_r.u)D_r^*J_u D_r\right]\right)\der s \, .\nonumber\\&&
 		\end{eqnarray}
Let us denote by $H_s^r$ the term in front of $ Y^r(\der s)$. Now, we shall apply Lemma \ref{lfgnmd}. To this end, recall that
%
%
for all $ \rho\in\Sb_\Hb, \,  |\Tr(\rho J_u)|\leq\|J_u\|_\infty$,  this implies the following estimates:
 		\[
 	\langle\int_0^.H_s^r\der Y^r_{s},\int_0^.H_s^r\der Y^r_{s}\rangle_t\leq(2\|J_u\|_\infty^2 +|u|^2+2|u|\|J_u\|_\infty)^2\|D_r^*D_r\|_\infty t \, ,\]
 		\[\langle\int_0^.-2\Tr(\rho_{s-}J_u)\der M_{s},\int_0^.-2\Tr(\rho_{s-}J_u)\der M_{s}\rangle_t\leq4\|J_u\|_\infty^2(|u|+2\|J_u\|_\infty)^2\left(\sum_{r=1}^{2d}\|D_r^*D_r\|_\infty\right)t \, ,\]
 		 		\[\langle S^h,S^h\rangle_t\leq 64\|J_u\|_\infty^4\left(\sum_{r=1}^{2d}\|D_r^*D_r\|_\infty\right)t \, .\]
Lemma \ref{lfgnmd} shows that only the last term of \eqref{brabra} contributes to $\lim\limits_{t \to \infty}\dfrac{[M,M]_t}{t}$. Applying Theorem \ref{ergod}, we get
 		
 		\begin{align*}\lim\limits_{t \to \infty}\dfrac{[M,M]_t}{t}&=\lim\limits_{t \to \infty}\frac{1}{t}\int_{0}^t \Tr\Bigg(\rho_{s-}\Bigg[-2(m.u)J_u+	\sum\limits_{r=1}^{2d}(e_r.u)^2D_r^*D_r\\
 		&\hphantom{ccccccccccccccclcccccccccccccccccc}+2\sum\limits_{r=1}^{2d}(e_r.u)D_r^*J_u D_r\Bigg]\Bigg)\der s\\
 		&=\Tr\left(\rho_{inv}\left[-2(m.u)J_u+	\sum_{r=1}^{2d}(e_r.u)^2D_r^*D_r+2\sum_{r=1}^{2d}(e_r.u)D_r^*J_u D_r\right]\right) \, .\end{align*}
 		
 		Now defining $\sigma^2_u=\Tr\left(\rho_{inv}\left[-2(m.u)J_u+	\sum\limits_{r=1}^{2d}(e_r.u)^2D_r^*D_r+2\sum\limits_{r=1}^{2d}(e_r.u)D_r^*J_u D_r\right]\right)$, Theorem \ref{tclmd} states that:
 			\[\dfrac{M_t}{\sqrt{t}}=\dfrac{X_t.u-(m.u)t+\Tr(\rho_t J_u)-\Tr(\rho_0 J_u)-X_0.u}{\sqrt{t}} \underset{t \to +\infty}{\overset{\Lin}{\longrightarrow}}\mathcal{N}(0,\sigma^2_u) \, . \]

 Since $\Big(\Tr(\rho_t J_u)-\Tr(\rho_0 J_u)-X_0.u\Big)$ is bounded independently of $t$, one can obviously deduce that for all $u=(u_1,...,u_d)\in\R^d$, one has
 			\[\dfrac{X_t.u-(m.u)t}{\sqrt{t}} \underset{t \to +\infty}{\overset{\Lin}{\longrightarrow}}\mathcal{N}(0,\sigma^2_u),  \]
 		where \[\begin{array}{lll}\sigma^2_u&=&-2\sum\limits_{r,q=1}^{d}u_ru_qm_q\Tr(\rho_{inv}J_r)
 		+\sum\limits_{r=1}^{d}u_r^2\Big(\Tr(D_r\rho_{inv}D_r^*)+\Tr(D_{r+d}\rho_{inv}D_{r+d}^*)\Big)\\&&+2\sum\limits_{r,q=1}^{d}u_ru_q\Big(\Tr(D_q\rho_{inv}D_q^*J_r)-\Tr(D_{q+d}\rho_{inv}D_{q+d}^*J_r)\Big) \, .\end{array}\]
		 Finally we can check that $\sigma^2_u=\sum\limits_{r,q=1}^{d}u_r u_q	V_{rq}$ for all $u=(u_1,...,u_d)\in\R^d$, which ends the proof.
	\end{proof}
	
We finish this section by specifying the case $d=1$. This is the simpler case where the walker can only jump to the right or the left. The Markov process $(\rho_t,X_t)_{t\geq 0}$, with values in $\mathcal{S_{\Hb}}\times\Z$, is defined by the following differential equations:
	\begin{equation*}\begin{array}{llll}
	
	 &\der\rho_{s}&=& \Big(D_0\rho_{s-}+\rho_{s-}D_0^*-\rho_{s-}\Tr(D_0\rho_{s-}+\rho_{s-}D_0^*)\Big)\der s \\
	 \\
	 &&& + \Big(\frac{D_1 \rho_{s-}D_1^*}{\Tr(D_1\rho_{s-}D_1^*)}-\rho_{s-}\Big) \tilde{N}^1(\der s)+ \Big(\frac{D_2 \rho_{s-}D_2^*}{\Tr(D_2\rho_{s-}D_2^*)}-\rho_{s-}\Big) \tilde{N}^2(\der s)\\ \\
	 \text{and}&\der X_s&=&\tilde{N}^1(\der s)-\tilde{N}^2(\der s),
	 \end{array}
	 \end{equation*}
	 where $D_0,D_1,D_{2}\in \Bb(\Hb)$ such that $D_0+D_0^*+D_1^* D_1+D_{2}^* D_{2}=0$.
	
	\begin{thm}\label{tcl} Suppose that the Lindblad operator $$\mathcal L(\rho)=D_0\rho+\rho D_0^*+D_1\rho D_1^* + D_{2}\rho D_{2}^*$$ admits a unique density matrix $\rho_{inv}$ such that $\mathcal L(\rho_{inv})=0$.\\
	
	Set $m=\Tr(D_1\rho_{inv} D_1^*)-\Tr(D_{2}\rho_{inv} D_{2}^*)$, and let $J$ the unique solution of $\mathcal L^*(J)=-D_1^* D_1 + D_{2}^* D_{2}+m I$ such that $Tr(J)=0$.\\
	 
	Then, we have the following CLT
	
			 \[\dfrac{X_t-tm}{\sqrt{t}} \underset{t \to +\infty}{\overset{\Lin}{\longrightarrow}}\mathcal{N}(0,\sigma^2) \]
			where $\sigma^2=\Tr(\rho_{inv}[-2mJ+D_1^*D_1+D_{2}^*D_{2}+2D_1^*J D_1-2D_{2}^*JD_{2}])$.
		\end{thm}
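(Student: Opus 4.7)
The plan is to derive Theorem \ref{tcl} directly from Theorem \ref{tcld} by specializing to dimension $d=1$. First I would match the general data to the scalar data of the statement. With $d=1$ the canonical vectors reduce to $e_1=1$ and $e_2=-e_1=-1$, so
\[
m=\sum_{r=1}^{2}\Tr(D_r\rho_{inv}D_r^*)\,e_r=\Tr(D_1\rho_{inv}D_1^*)-\Tr(D_2\rho_{inv}D_2^*)\in\R,
\]
which coincides with the scalar $m$ of the theorem. Choosing $u=1\in\R$ in the Poisson equation \eqref{1lemd} then yields
\[
\mathcal L^*(J_1)=-\bigl(D_1^*D_1-D_2^*D_2\bigr)+mI,
\]
so $J_1$ coincides with the operator $J$ defined in the statement (existence and uniqueness under the normalization $\Tr(J)=0$ are already furnished by the lemma preceding \eqref{1lemd}, since the right-hand side lies in $\{\rho_{inv}\}^{\perp}=\im(\mathcal L^*)$).

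Next I would read off the variance from the general formula. Theorem \ref{tcld} gives the weak convergence of $(X_t-tm)/\sqrt t$ to $\mathcal N(0,V_{11})$, where taking $r=q=1$ in the matrix formula produces
\begin{align*}
V_{11}&=-2m\,\Tr(\rho_{inv}J)+\Tr(D_1\rho_{inv}D_1^*)+\Tr(D_2\rho_{inv}D_2^*)\\
&\quad+2\Tr(D_1\rho_{inv}D_1^*J)-2\Tr(D_2\rho_{inv}D_2^*J).
\end{align*}
Using the cyclic property of the trace to rewrite $\Tr(D_r\rho_{inv}D_r^*J)=\Tr(\rho_{inv}D_r^*JD_r)$ for $r=1,2$ and then factoring $\rho_{inv}$ converts this into
\[
V_{11}=\Tr\!\left(\rho_{inv}\!\left[-2mJ+D_1^*D_1+D_2^*D_2+2D_1^*JD_1-2D_2^*JD_2\right]\right)=\sigma^2,
\]
as required. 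Since the whole argument is a direct substitution into results already proved, I do not anticipate any real obstacle; the only care is to track the signs stemming from $e_2=-1$ and to verify that the $d=1$ instance of the Poisson equation is exactly the equation defining $J$ in the statement.
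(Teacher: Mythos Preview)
Your proposal is correct and matches the paper's approach exactly: the paper presents Theorem \ref{tcl} as the $d=1$ specialization of Theorem \ref{tcld} without a separate proof, and you have carried out that specialization carefully, correctly tracking the signs from $e_2=-1$ and identifying $J_1$ with the $J$ of the statement. There is nothing to add.
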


\section{Large Deviation Principle}

Here, we study a Large Deviation Principle (LDP) for CTOQWs. Our proof is inspired by strategies developed in \cite{jakvsic2014entropic,carbone2014homogeneous} which are essentially based on the application of the Gärtner-Ellis Theorem (\cite{dembo2009large}). 
 In the following, one can notice that the Perron-Frobenius Theorem for positive maps (\cite{evans1978spectral}) is the main tool to apply the Gärtner-Ellis Theorem.\\

In order to prove the LDP, we shall use a deformed Lindblad operator. From now on, we define for all $u\in\R^d$, the operators $D_r^{(u)}=e^\frac{u.e_r}{2}D_r$,  $r\in\{0,\ldots,2d\}$, and we denote $\mathcal L^{(u)}$ the deformed Lindblad operator associated to the operators $D_r^{(u)}$, that is,
\[ \, \mathcal L^{(u)}(\rho)=D_0\rho+\rho D_0^*+\sum\limits_{r=1}^{2d}e^{u.e_r}D_r\rho D_r^* \, ,\]
for all $\rho\in\mathcal{S_{\Hb}}.$

Now, defining
$$\phi^{(u)}(\rho)=\sum\limits_{r=1}^{2d}e^{u.e_r}D_r\rho D_r^*,$$
for all $\rho\in\mathcal S_{\mathcal H}$ and all $u\in\mathbb R^d$, we can see that $\mathcal L^{(u)}$ is written in the usual Lindblad form, that is $\mathcal L^{(u)}(\rho)=D_0 \rho+\rho D_0^*+\phi^{(u)}(\rho)$. This way, the semi-group $\{e^{t\mathcal L^{(u)}}\}_{t\geq 0}$ is a completely positive (CP) semi-group (see \cite{MR551466} for the proof). In a same way, we write $\Lin$ as:
\[\mathcal L(\rho)=D_0 \rho+\rho D_0^*+\phi(\rho) \]
for all $\rho\in\mathcal{S_{\Hb}}$ where $\phi(\rho)=\sum\limits_{r=1}^{2d}D_r\rho D_r^*$.\\

In this part, the notion of irreducibility is required. This notion was originally defined in \cite{MR0269241}. There are several equivalent definitions that the reader can find in \cite{carbone2014homogeneous,carbone2014open}.
\begin{defi}\label{irr}
The CP map $\phi:\rho\mapsto\sum\limits_{r=1}^{2d}D_r\rho D_r^*$ is called irreducible if for any non-zero $x\in\Hb$, the set $\C[D]x$ is dense in $\Hb$, where $\C[D]$ is the set of polynomials in $D_r$, $r\in\{1,...,2d\}$.
\end{defi}

In the sequel, we need $\{e^{t\mathcal L^{(u)}}\}_{t\geq 0}$ to be positivity improving. This means that for all $t>0$ and for all $A\geq0$, $A\in\Bb(\Hb)$, one has $e^{t\mathcal L^{(u)}}(A)>0$.  The following lemma provides an effective criterion for verifying that $\{e^{t\mathcal L^{(u)}}\}_{t\geq 0}$ is positivity improving.

\begin{lem}\label{piouf}
 If $\phi$ is irreducible, then $\phi^{(u)}$ is irreducible by a direct application of Definition \ref{irr} and therefore $\{e^{t\mathcal L^{(u)}}\}_{t\geq 0}$ is positivity improving (\cite{jakvsic2014entropic}).
\end{lem}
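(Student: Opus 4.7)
The plan is a two-step argument that mirrors the statement's phrasing. First, I would verify irreducibility of $\phi^{(u)}$ directly from Definition \ref{irr}. The Kraus operators of $\phi^{(u)}$ are $D_r^{(u)} = e^{u\cdot e_r/2} D_r$, so every monomial $D_{r_1}^{(u)} \cdots D_{r_k}^{(u)}$ is a nonzero scalar multiple (namely $e^{(u\cdot e_{r_1} + \cdots + u\cdot e_{r_k})/2}$) of the corresponding monomial $D_{r_1} \cdots D_{r_k}$. Consequently the sets $\C[D^{(u)}]x$ and $\C[D]x$ coincide as subspaces of $\Hb$ for every nonzero $x$, and density of one in $\Hb$ is equivalent to density of the other. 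Applying Definition \ref{irr}, irreducibility of $\phi$ immediately yields irreducibility of $\phi^{(u)}$.

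The second step invokes the cited result from \cite{jakvsic2014entropic}: whenever the CP part $\phi^{(u)}$ of a Lindbladian of the form $\mathcal L^{(u)}(\rho) = D_0\rho + \rho D_0^* + \phi^{(u)}(\rho)$ is irreducible, the associated semigroup $\{e^{t\mathcal L^{(u)}}\}_{t \geq 0}$ is positivity improving for every $t > 0$. The standard argument behind that statement expands $e^{t\mathcal L^{(u)}}(A)$ as a Dyson series in powers of $\phi^{(u)}$ sandwiched between the non-unital evolutions $e^{s D_0}$ and $e^{s D_0^*}$; irreducibility of $\phi^{(u)}$ guarantees that for any nonzero $A \geq 0$ sufficiently many iterates of $\phi^{(u)}$ produce an operator acting cyclically on $\Hb$, so that the whole integrated series is strictly positive on every nonzero vector of $\Hb$ as soon as $t > 0$.

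Since the first step is essentially a relabeling of Kraus operators by nonzero scalars and the second is a direct appeal to an existing theorem, there is no serious obstacle. The only point worth being pedantic about is that the scalar prefactors $e^{(u\cdot e_{r_1} + \cdots + u\cdot e_{r_k})/2}$ never vanish for $u \in \R^d$, which is obvious but is what makes the identification $\C[D^{(u)}]x = \C[D]x$ hold as an equality of sets (not merely of closures) and hence makes the first step a genuine triviality rather than a limiting argument.
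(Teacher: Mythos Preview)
Your proposal is correct and matches the paper's approach exactly: the paper provides no separate proof for this lemma, embedding the entire justification in the statement itself (``by a direct application of Definition \ref{irr}'' plus the citation to \cite{jakvsic2014entropic}). Your two-step argument simply unpacks those two clauses, and the only substantive observation needed---that the nonzero scalar factors $e^{u\cdot e_r/2}$ leave $\C[D^{(u)}]x = \C[D]x$ unchanged---is precisely what the paper means by ``direct application.''
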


The next two lemmas are relevant in the proof of the main theorem of this part. In particular, the following lemma describes the largest eigenvalue associated to the deformed Lindblad semigroup $\{e^{t\mathcal L^{(u)}}\}_{t\geq 0}$.
\begin{lem}\label{perronf}Let $t\geq 0$,
suppose that $\phi$ is irreducible. Set 
\[l_u=\max\{\mathrm{Re}(\lambda),\lambda\in\Sp(\mathcal L^{(u)})\} \, .\] Then $e^{t l_u}$ is an algebraically simple
eigenvalue of $e^{t\mathcal L^{(u)}}$, and the associated eigenvector $V_u$ is strictly positive (which can be normalized to be in $ \mathcal{S_{\Hb}}$). Besides, the map $u\mapsto l_u$ can be extended to be analytic in a neighbourhood of $\R^d$.
\end{lem}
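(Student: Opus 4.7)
The plan is to derive the Perron-Frobenius statement from the positivity improving property of Lemma \ref{piouf}, then invoke the spectral mapping theorem to pass from $e^{t\mathcal{L}^{(u)}}$ to $\mathcal{L}^{(u)}$, and finally use analytic perturbation theory to obtain analyticity of $u\mapsto l_u$.

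First, fix $t>0$ and $u\in\R^d$. By Lemma \ref{piouf}, the map $e^{t\mathcal{L}^{(u)}}$ is positivity improving on the finite-dimensional ordered Banach space $\Bb(\Hb)$ equipped with the closed cone of positive operators, whose interior is non-empty. The Perron-Frobenius theorem of \cite{evans1978spectral} then applies and yields that the spectral radius $r(e^{t\mathcal{L}^{(u)}})$ is an algebraically simple eigenvalue of $e^{t\mathcal{L}^{(u)}}$, whose associated one-dimensional eigenspace is spanned by a strictly positive operator $V_u$ that can be normalized to lie in $\Sb_\Hb$.

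By the spectral mapping theorem, valid in this finite dimensional setting,
\[\Sp\big(e^{t\mathcal{L}^{(u)}}\big)=\big\{e^{t\lambda}:\lambda\in\Sp(\mathcal{L}^{(u)})\big\}\, ,\]
so $r(e^{t\mathcal{L}^{(u)}})=e^{tl_u}$ and hence $e^{tl_u}$ is algebraically simple. Moreover, since $\mathcal{L}^{(u)}$ commutes with $e^{t\mathcal{L}^{(u)}}$, it preserves the one-dimensional eigenspace $\C\,V_u$; thus $V_u$ is automatically an eigenvector of $\mathcal{L}^{(u)}$ with eigenvalue necessarily equal to $l_u$, and in particular $V_u$ does not depend on $t$.

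Finally, the map $u\mapsto \mathcal{L}^{(u)}$, viewed as a function from $\C^d$ into the space of linear operators on $\Bb(\Hb)$, is entire, being a finite linear combination of exponentials of linear forms in $u$ multiplying fixed operators. Since $l_u$ is algebraically simple, hence isolated from the remainder of the spectrum at every $u\in\R^d$, standard analytic perturbation theory for an isolated simple eigenvalue yields an analytic extension of $u\mapsto l_u$ to a complex neighbourhood of $\R^d$. The main subtlety of the argument is transferring Perron-Frobenius simplicity and strict positivity from the semigroup element $e^{t\mathcal{L}^{(u)}}$ to its generator $\mathcal{L}^{(u)}$; this is handled cleanly by the spectral mapping identity together with the commutation $\mathcal{L}^{(u)}\,e^{t\mathcal{L}^{(u)}}=e^{t\mathcal{L}^{(u)}}\mathcal{L}^{(u)}$.
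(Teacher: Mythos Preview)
Your overall architecture matches the paper's: use Lemma \ref{piouf} to get positivity improving, invoke Perron--Frobenius, then apply analytic perturbation theory (Kato) for the analyticity of $u\mapsto l_u$. The spectral mapping and commutation remarks you add to transfer the eigenpair from $e^{t\mathcal L^{(u)}}$ to $\mathcal L^{(u)}$ are correct and in fact more explicit than the paper, which leaves that passage implicit.

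The one substantive discrepancy is your appeal to \cite{evans1978spectral} for \emph{algebraic} simplicity of the Perron eigenvalue. The paper's proof treats this as a genuine extra step: it cites \cite{jakvsic2014entropic} and \cite{evans1978spectral} only for \emph{geometric} simplicity together with strict positivity of $V_u$, and then gives a separate argument to upgrade to algebraic simplicity. Concretely, it conjugates by $V_u^{1/2}$ to obtain the unital map
\[
\Psi(X)=V_u^{-1/2}\,e^{t(\mathcal L^{(u)}-l_u)}\big(V_u^{1/2}XV_u^{1/2}\big)\,V_u^{-1/2},
\]
observes that $\Psi^*$ is trace preserving (since $\Psi(I)=I$), and then rules out a nontrivial Jordan block at the eigenvalue $1$ of $\Psi^*$: if $\Psi^*(X_2)=X_1+X_2$ with $X_1$ the Perron eigenvector, trace preservation forces $\Tr(X_1)=0$, hence $X_1=0$, a contradiction. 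So the paper supplies exactly the piece you are asserting by citation. Unless you can point to a precise statement in \cite{evans1978spectral} (or elsewhere) giving algebraic simplicity for positivity improving maps on $\Bb(\Hb)$, your sentence ``the spectral radius \dots\ is an algebraically simple eigenvalue'' is a gap; the cleanest fix is to insert the paper's short Jordan-block argument, or to cite a finite-dimensional Perron theorem for strongly positive cone maps that explicitly states algebraic simplicity.
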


\begin{proof}
The first part has been proved in \cite{jakvsic2014entropic}, the proof is based on the Perron-Frobenius Theorem for CP maps (\cite{evans1978spectral}). In particular, using such result, one get that $e^{t l_u}$ is an geometric simple
eigenvalue of $e^{t\mathcal L^{(u)}}$, and the associated eigenvector $V_u\in \Bb(\Hb)$ is strictly positive. It remains to show the algebraic simplicity of the eigenvalue $e^{t l_u}$.  To this end, we introduce:

\[\forall X\in \Bb(\Hb), \Psi(X)=V_u^{-\frac{1}{2}}e^{t(\mathcal L^{(u)}-l_u)}(V_u^{\frac{1}{2}}XV_u^{\frac{1}{2}}) \, V_u^{-\frac{1}{2}} \, .\]
Note that $V_u^{-\frac{1}{2}}$ is well defined since $V_u$ is strictly positive. The irreducibility of $\phi$ involves the positivity improving of $\Psi$ (with Lemma \ref{piouf}), and one has that the diamond norm of $\Psi$ is equal to one since $\Psi(I)=I$. Therefore, Theorem 2.2 in \cite{jakvsic2014entropic} implies that 1 is a geometrically simple eigenvalue of $\Psi$, and this holds for $\Psi^*$ too. By applying Theorem 2.5. of \cite{evans1978spectral}, one get that the associated eigenvector $X_1$ of $\Psi^*$ is positive. Assume by contradiction that 1 is not algebraically simple for $\Psi^*$. Then the Jordan decomposition shows that there exists $X_2$ such that $\Psi^*(X_2)=X_1+X_2 $. Besides $\Psi^*$ is trace preserving since $\Psi(I)=I$, therefore $\Tr(X_1)=0$, then $X_1=0$ which is impossible. This implies that $e^{t l_u}$ is algebraically simple for $e^{t\mathcal L^{(u)}}$. The analyticity of $u\mapsto l_u$ is a simple application of perturbation theory for matrix eigenvalues (see Chapter II in \cite{kato2013perturbation}).
\end{proof}

The next lemma describes the link between the moment generating function of $\big(X_t-X_0\big)$ and the deformed Lindblad semigroup.
\begin{lem}\label{dfl}
 For all $t\geq 0$ and all $u\in\R^d$, one has 
\[\E\Big(e^{u.(X_t-X_0)}\Big)= \Tr\Big(e^{t\mathcal L^{(u)}}(\E[\rho_0])\Big) \, .\]
\end{lem}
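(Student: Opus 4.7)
The plan is to work with the operator-valued process $g_t := e^{u.(X_t-X_0)}\rho_t \in \Bb(\Hb)$ and show directly that its expectation satisfies the linear ODE generated by $\mathcal L^{(u)}$. Once this is established, the result follows by taking the trace, since $\Tr(\rho_t)=1$ for all $t$, so $\Tr(g_t)=e^{u.(X_t-X_0)}$.

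First, I would apply Itô's formula for jump processes to $g_t$, using the dynamics \eqref{eqcul} of Proposition~\ref{omega}. Between jumps, $X_t$ is constant so $g_t$ evolves deterministically through $e^{u.(X_t-X_0)}\F(\rho_t)\,\der t$. At a jump of type $r$ (i.e.\ at a jump time of $\tilde N^r$), the exponential is multiplied by $e^{u.e_r}$ and $\rho_{t-}$ is replaced by $D_r\rho_{t-}D_r^*/\Tr(D_r\rho_{t-}D_r^*)$, so the jump of $g$ equals
\[
\Delta g_t = e^{u.(X_{t-}-X_0)}\!\left(e^{u.e_r}\frac{D_r\rho_{t-}D_r^*}{\Tr(D_r\rho_{t-}D_r^*)}-\rho_{t-}\right).
\]
Decomposing $\der\tilde N^r_t = \Tr(D_r\rho_{t-}D_r^*)\,\der t + \der Y^r_t$ into its compensator and martingale part, the drift of $g_t$ becomes
\[
e^{u.(X_{t-}-X_0)}\!\left[D_0\rho_{t-}+\rho_{t-}D_0^* - \rho_{t-}\Tr(D_0\rho_{t-}+\rho_{t-}D_0^*) + \sum_{r=1}^{2d}\!\left(e^{u.e_r}D_r\rho_{t-}D_r^* - \rho_{t-}\Tr(D_r\rho_{t-}D_r^*)\right)\right]\!.
\]
The crucial algebraic simplification is that the nonlinear term $\rho_{t-}\Tr(D_0\rho_{t-}+\rho_{t-}D_0^*+\sum_r D_r\rho_{t-}D_r^*)$ vanishes identically, because the trace preservation relation $D_0+D_0^*+\sum_r D_r^*D_r=0$ forces $\Tr(\mathcal L(\rho_{t-}))=0$. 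What remains is exactly $e^{u.(X_{t-}-X_0)}\mathcal L^{(u)}(\rho_{t-})$.

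Taking expectation and using Fubini (justified by Lemma~\ref{majo}, which provides the exponential moment bound on $e^{u.(X_t-X_0)}$ ensuring that the stochastic integrals against $Y^r$ are genuine martingales), I obtain
\[
\frac{\der}{\der t}\E[g_t] = \mathcal L^{(u)}(\E[g_t]),\qquad \E[g_0]=\E[\rho_0].
\]
Since $\mathcal L^{(u)}$ is a bounded linear map on the finite-dimensional space $\Bb(\Hb)$, the unique solution is $\E[g_t]=e^{t\mathcal L^{(u)}}(\E[\rho_0])$. Taking traces finishes the proof:
\[
\E\!\left[e^{u.(X_t-X_0)}\right] = \Tr(\E[g_t]) = \Tr\!\left(e^{t\mathcal L^{(u)}}(\E[\rho_0])\right).
\]

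The main obstacle is not algebraic but the integrability bookkeeping: I must ensure that the compensated jump integrals are genuine (not just local) martingales so that their expectations vanish. The exponential moment estimate of Lemma~\ref{majo} controls $\sup_{s\le t}e^{u.(X_s-X_0)}$ in $L^1$, and since $\|\rho_s\|\le 1$ and the operators $D_r$ are bounded on finite-dimensional $\Hb$, the integrands $e^{u.(X_{s-}-X_0)}[e^{u.e_r}D_r\rho_{s-}D_r^*/\Tr(D_r\rho_{s-}D_r^*)-\rho_{s-}]\Tr(D_r\rho_{s-}D_r^*)$ driving $Y^r$ have finite expectation of their total variation on $[0,t]$, which suffices.
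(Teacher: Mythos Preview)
Your proof is correct. The core computation---applying It\^o to $g_t=e^{u.(X_t-X_0)}\rho_t$, observing that the nonlinear term $\rho_{t-}\Tr\big(\mathcal L(\rho_{t-})\big)$ vanishes, and using Lemma~\ref{majo} to upgrade local martingales to true ones---is exactly what the paper does in deriving its equation~(17). The difference lies only in how the resulting linear equation is solved: the paper keeps the integral form $\E[g_t]=\E[\rho_0]+\int_0^t\mathcal L^{(u)}(\E[g_s])\,\der s$, iterates it into a Dyson series $\sum_k \frac{t^k}{k!}(\mathcal L^{(u)})^k(\E[\rho_0])$, and controls the remainder term explicitly; you instead recognise it as a linear ODE on the finite-dimensional space $\Bb(\Hb)$ and invoke uniqueness to write the solution directly as $e^{t\mathcal L^{(u)}}(\E[\rho_0])$. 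Your route is shorter and avoids the remainder estimate; the paper's Dyson approach is more explicit but carries the same content. Both are valid and the underlying mechanism is identical.
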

\begin{proof}
The idea of the proof consists in rewriting $\E\Big(e^{u.(X_t-X_0)}\Big)$ with the help of a Dyson expansion. From now, we set $u\in\R^d$ and $f:(\rho,x)\mapsto e^{u.x}\in\mathcal C^1$. Since $\mathcal A$ is also the generator of $(\rho_t,X_t-X_0)_{t\geq0}$, the process $(M_t^f)_{t\geq0}$ defined by
\[M_t^f=f(\rho_t,X_t-X_0)-f(\rho_0,0)-\int_0^t \A f(\rho_{t_1-},X_{t_1-}-X_0) \der {t_1}\] 
is a local martingale. Due to Lemma \ref{majo}, one has the following upper bound. 
\begin{align*}
\E\Bigg(\sup_{0\leq s\leq t}|M^f_s|\Bigg)\leq & \, \exp\Big((2d)t(e^{|u|}-1)\Big)+1\\
&+(e^{|u|}+1)\left(\sum_{r=1}^{2d}\|D_r^*D_r\|_\infty\right)\exp\Big((2d)t(e^{|u|}-1)\Big) \, .\end{align*}
 Then $\E\Bigg(\sup_{0\leq s\leq t}|M_s^f|\Bigg)<\infty$, for all $t\geq 0$ which implies that $(M_t^f)_{t\geq0}$ is a true martingale.  
This leads to
 $$\E\Big(f(\rho_t,X_t-X_0)\Big)=\E\Big(f(\rho_0,0)\Big)+\E\Big(\int_0^t \A f(\rho_{t_1},X_{t_1}-X_0) \der {t_1}\Big).$$ This way, we can develop $\E\Big(e^{u.(X_t-X_0)}\Big)$. For all $t\geq0$,
\begin{align}\label{16}
\E\Big(e^{u.(X_t-X_0)}\Big)&=1+\E\Bigg(\int_0^t \sum_{r=1}^{2d}\Big[e^{u.(X_{t_1}-X_0+e_r)}-e^{u.(X_{t_1}-X_0)}\Big]\Tr(D_r\rho_{t_1} D_r^*) \,  \der {t_1}\Bigg)\nonumber\\
&= 1+\E\Bigg(\int_0^t e^{u.(X_{t_1}-X_0)}\Bigg[\Tr(\mathcal L^{(u)}(\rho_{t_1}))\nonumber\\
&\hphantom{= \E\Big(\int_0^t e^{u.(llllllll)}[\Tr(}-\Tr(D_0\rho_{t_1}+\rho_{t_1}D_0^*+\sum_{r=1}^{2d}D_r\rho_{t_1} D_r^*)\Bigg] \,  \der {t_1}\Big)\nonumber\\
&= 1+\E\Bigg(\int_0^t e^{u.(X_{t_1}-X_0)}\Tr\Big(\mathcal L^{(u)}(\rho_{t_1})\Big) \,  \der {t_1}\Bigg)\nonumber\\
&= 1+\int_0^t \Tr\Bigg(\mathcal L^{(u)}\Bigg(\E\Big[e^{u.(X_{t_1}-X_0)}\rho_{t_1}\Big]\Bigg)\Bigg) \,  \der {t_1} \, .
\end{align}
In a similar way, we want to develop $\E\Big[e^{u.(X_{t_1}-X_0)}\rho_{t_1}\Big]$. Let $g:(\rho,x)\mapsto e^{u.x}\rho\in\mathcal C^1$, the process $(M^g_t)_{t\geq0}$ defined by
\[M_t^g=g(\rho_t,X_t-X_0)-g(\rho_0,0)-\int_0^t \A g(\rho_{t_2-},X_{t_2-}-X_0) \der {t_2}\] 
is a local martingale. One can also check that $\E\Bigg(\sup_{0\leq s\leq t}|M_s^g|\Bigg)<\infty$ for all $t\geq 0$, which implies that $(M_t^g)_{t\geq0}$ is a true martingale. And therefore, one has
\begin{eqnarray}\label{17}
\E\Big[e^{u.(X_{t_1}-X_0)}\rho_{t_1}\Big]&= &\E(\rho_0)+\E\Big(\int_{ 0}^{t_1} e^{u.(X_{t_2}-X_0)}\F(\rho_{t_2})\der t_2\Big)\nonumber\\
&&+\E\Bigg(\int_{ 0}^{t_1}\sum\limits_{r=1}^{2d}\Bigg[e^{u.(X_{t_2}-X_0+e_r)}\frac{D_r\rho_{t_2} D_r^*}{\Tr(D_r\rho_{t_2} D_r^*)}\nonumber\\
&&\hphantom{+\E\Bigg(\int_{ 0}^{t_1}\sum\limits_{r=1}^{2d}\Big[e^{u.(X_{t_2})}}-e^{u.(X_{t_2}-X_0)}\rho_{t_2}\Bigg]\Tr(D_r\rho_{t_2} D_r^*)   \der t_2\Bigg)\nonumber\\
&=&\E(\rho_0)+\E\Bigg(\int_{ 0}^{t_1} e^{u.(X_{t_2}-X_0)}\Bigg[\Tr\Big(\mathcal L^{(u)}(\rho_{t_2})\Big)\nonumber\\&&\phantom{ccccccccc}-\rho_{t_2}\Tr\Bigg(D_0\rho_{t_2}+\rho_{t_2}D_0^*+\sum\limits_{r=1}^{2d}D_r\rho_{t_2} D_r^*\Bigg)\Bigg]\der t_2\Bigg)\nonumber\\
&=&\E(\rho_0)+\int_{ 0}^{t_1}  \Tr\Bigg(\mathcal L^{(u)}\Big(\E\Big[e^{u.(X_{t_2}-X_0)}\rho_{t_2}\Big]\Big)\Bigg) \,  \der {t_2} \, .
\end{eqnarray}

We plug (\ref{17}) into (\ref{16}) and we get, for all $t\geq0$,
\begin{align*}\E\Big(e^{u.(X_t-X_0)}\Big)=&  \, 1+t\Tr\Big(\mathcal L^{(u)}(\E[\rho_{0}])\Big)\\
&+\int_0^t \int_0^{t_1} \Tr\Bigg(\left(\mathcal L^{(u)}\right)^2\Big(\E\Big[e^{u.(X_{t_2}-X_0)}\rho_{t_2}\Big]\Big)\Bigg) \,  \der {t_2}\der {t_1} \, .
\end{align*}
By iterating this procedure, we obtain 
\begin{align*}
\E\Big(e^{u.(X_t-X_0)}\Big)=& \Tr\Bigg(\sum_{k=0}^{j}\frac{t^k}{k!}\left(\mathcal L^{(u)}\right)^k\Big(\E[\rho_{0}]\Big)\Bigg)\nonumber\\&+\int_{0<t_j<...<t}  \Tr\Bigg(\left(\mathcal L^{(u)}\right)^{j+1}\Big(\E\big[e^{u.(X_{t_{j+1}}-X_0)}\rho_{t_{j+1}}\big]\Big)\Bigg) \,  \der {t_{j+1}}...\der {t_1} \, .\nonumber
\end{align*}
for all $j\in\mathbb N$. Now it is obvious that the first term converges to $\Tr\Bigg(e^{t\mathcal L^{(u)}}\Big(\E[\rho_0]\Big)\Bigg)$ when $j$ goes to infinity. In order to conclude it remains to prove that the second terms converges to zero. Let us estimate its norm.

\begin{eqnarray}\left\|\int_{0<t_j<...<t}  \Tr\Bigg(\left(\mathcal L^{(u)}\right)^{j+1}\Big(\E\Big[e^{u.(X_{t_{j+1}}-X_0)}\rho_{t_{j+1}}\Big]\Big)\Bigg) \,  \der {t_{j+1}}...\der {t_1}\right\|_1\hfill\nonumber\\
\hfill\leq\frac{t^{j+1}}{(j+1)!}\left\|\mathcal L^{(u)}\right\|_1^{j+1}\sup_{0\leq s\leq t}\E\Big[e^{u.(X_{s}-X_0)}\Big] \, .\nonumber
\end{eqnarray}

Finally, thanks to Lemma \ref{majo} and Jensen's inequality,
\begin{eqnarray}\left\|\int_{0<t_j<...<t}  \Tr\Bigg(\left(\mathcal L^{(u)}\right)^{j+1}\Big(\E\Big[e^{u.(X_{t_{j+1}}-X_0)}\rho_{t_{j+1}}\Big]\Big)\Bigg) \,  \der {t_{j+1}}...\der {t_1}\right\|_1\hfill\nonumber\\
\hfill\leq\frac{t^{j+1}}{(j+1)!}\left\|\mathcal L^{(u)}\right\|_1^{j+1}e^{(2d)t(e^{|u|}-1)} \nonumber
\end{eqnarray}

which converges to 0 when $j$ goes to infinity.
\end{proof}

Now, we can state the main result of this part.

\begin{thm}\label{ldppp}
	Let $(\rho_t,X_t)_{t\geq 0}$ the Markov process defined in Proposition \ref{omega}. Assume that $\phi$ is irreducible. The process $\Big(\frac{X_t-X_0}{t}\Big)_{t\geq0} $ satisfies a Large Deviation
Principle with a good rate function $\Lambda^*$.

Explicitly there exists a lower semicontinuous mapping $ \Lambda^*: \R^d \mapsto [0, +\infty]$ with compact level sets $\{x | \Lambda^*(x) \leq \alpha\}$, such
that, for all open set $G$ and all closed set $F$ with $G \subset F \subset\R^d$, one has:
\begin{align*}
-\inf_{x\in G} \Lambda^*(x)&\leq \liminf_{t\rightarrow+\infty}\frac{1}{t}\log\Pro\Bigg(\frac{X_t-X_0}{t}\in G\Bigg)\\
\\  &\leq \limsup_{t\rightarrow+\infty}\frac{1}{t}\log\Pro\Bigg(\frac{X_t-X_0}{t}\in F\Bigg)\leq-\inf_{x\in F} \Lambda^*(x) \, .
\end{align*}

 Moreover, $\Lambda^*$ can be expressed explicitly,  
\[\Lambda^*:x\mapsto\sup_{u\in\R^d}(u.x-l_u) \]
where $l_u$ is defined in Lemma \ref{perronf}.

\end{thm}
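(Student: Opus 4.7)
The plan is to invoke the Gärtner-Ellis theorem (\cite{dembo2009large}) applied to the family $\big((X_t-X_0)/t\big)_{t\geq 0}$. I would verify that the logarithmic moment generating function
\[
\Lambda(u) := \lim_{t\to\infty} \frac{1}{t} \log \E\Big[e^{u \cdot (X_t - X_0)}\Big]
\]
exists, is finite for every $u \in \R^d$, and is differentiable there. The good rate function would then be $\Lambda^*(x)=\sup_{u\in\R^d}(u\cdot x-\Lambda(u))$, and my goal is to show $\Lambda(u)=l_u$, with $l_u$ the Perron eigenvalue produced by Lemma \ref{perronf}.

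The identification $\Lambda(u)=l_u$ proceeds via spectral analysis using Lemmas \ref{dfl} and \ref{perronf}. Lemma \ref{dfl} rewrites the moment generating function as $\Tr\big(e^{t\mathcal L^{(u)}}(\E[\rho_0])\big)$. Under the irreducibility hypothesis on $\phi$, Lemma \ref{piouf} ensures the semigroup $e^{t\mathcal L^{(u)}}$ is positivity improving, and Lemma \ref{perronf} asserts that $l_u$ is an algebraically simple eigenvalue of $\mathcal L^{(u)}$ with strictly positive eigenvector $V_u\in\mathcal S_{\Hb}$. Perron-Frobenius for positivity-improving maps then forces every other eigenvalue $\lambda$ of $\mathcal L^{(u)}$ to satisfy $\mathrm{Re}(\lambda)<l_u$. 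Letting $W_u>0$ denote the associated left Perron eigenvector normalized by $\Tr(W_u V_u)=1$, the spectral decomposition of $\mathcal L^{(u)}$ yields
\[
e^{-tl_u}\,e^{t\mathcal L^{(u)}}(\rho)=\Tr(W_u\rho)\,V_u+O(e^{-\delta t})
\]
for some $\delta>0$. Applying this with $\rho=\E[\rho_0]$, tracing, and noting that strict positivity of $W_u$ together with $\E[\rho_0]\geq 0$ and $\Tr\E[\rho_0]=1$ implies $\Tr(W_u\E[\rho_0])\Tr(V_u)>0$, I would conclude
\[
\frac{1}{t}\log\E\big[e^{u\cdot(X_t-X_0)}\big]\xrightarrow[t\to\infty]{}l_u.
\]

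To close the Gärtner-Ellis argument, I observe that $\Lambda=l_\cdot$ is finite on all of $\R^d$, so the effective domain is $\R^d$ and no boundary steepness needs to be checked. Analyticity of $u\mapsto l_u$, supplied by Lemma \ref{perronf}, delivers differentiability and hence the essential smoothness hypothesis, while continuity gives lower semicontinuity. The Gärtner-Ellis theorem then yields the LDP for $(X_t-X_0)/t$ with good rate function $\Lambda^*(x)=\sup_{u\in\R^d}(u\cdot x-l_u)$, exactly as stated.

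The main obstacle I anticipate is the spectral step producing $\Lambda(u)=l_u$: one must rule out contributions from spurious eigenvalues of $\mathcal L^{(u)}$ lying on the line $\mathrm{Re}(\lambda)=l_u$ as well as any polynomial-in-$t$ prefactors coming from Jordan blocks. Both obstructions are absent precisely because of the strict spectral gap and the \emph{algebraic} (not merely geometric) simplicity supplied by Lemma \ref{perronf}, obtained there via the diamond-norm normalization trick on $\Psi$. Consequently the delicate ingredients are already packaged in the earlier lemmas, and what remains here is mostly bookkeeping: combining Lemmas \ref{dfl}, \ref{piouf} and \ref{perronf}, checking $\Tr(W_u\E[\rho_0])\Tr(V_u)>0$ so the logarithm stays meaningful in the limit, and invoking Gärtner-Ellis.
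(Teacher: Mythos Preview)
Your proposal is correct and follows the same overall architecture as the paper: apply Lemma~\ref{dfl} to rewrite the moment generating function as $\Tr\big(e^{t\mathcal L^{(u)}}(\E[\rho_0])\big)$, identify the exponential rate as the Perron eigenvalue $l_u$ via Lemma~\ref{perronf}, and conclude with G\"artner--Ellis using the analyticity of $u\mapsto l_u$.

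The one genuine difference lies in how you extract $\Lambda(u)=l_u$. You argue via the full spectral decomposition $e^{-tl_u}e^{t\mathcal L^{(u)}}(\rho)=\Tr(W_u\rho)V_u+O(e^{-\delta t})$, which requires a strict spectral gap (no other eigenvalues on the line $\mathrm{Re}\lambda=l_u$) and the left eigenvector $W_u$. The paper instead uses a more elementary sandwich: it applies $e^{\epsilon\mathcal L^{(u)}}$ to render the initial state strictly positive, bounds it above and below by scalar multiples of the right eigenvector $V_u$, and then pushes forward by the positivity-preserving map $e^{(t-\epsilon)\mathcal L^{(u)}}$ to obtain two-sided bounds of the form $c_1 e^{(t-\epsilon)l_u}\leq \E(e^{u\cdot(X_t-X_0)})\leq c_2 e^{(t-\epsilon)l_u}$. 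This comparison argument uses only positivity improving/preserving and the existence of a strictly positive eigenvector, bypassing the left eigenvector and any explicit spectral-gap estimate. Your route is the more standard spectral-theoretic one and is perfectly valid (the absence of other peripheral eigenvalues does follow from positivity improving, as you note); the paper's route is slightly more hands-on but needs less Perron--Frobenius machinery at that step.
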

	\noindent\textbf{Remark:} Moreover, if $\E(e^{u.X_0})<\infty$ then the LDP holds for $(X_t)_{t\geq0}$ and not only for $(X_t-X_0)_{t\geq0}$. In this case,  Proposition \ref{lien} allows us to have the LDP for $(Q_t)_{t\geq0}$.
\begin{proof}
	The main tool of the proof is the Gärtner-Ellis Theorem (GET) (see \cite{dembo2009large}
	). We focus on the moment generating function which is involved in the GET. Let $t\geq0 $ and $u\in\R^d$, Lemma \ref{dfl} implies that 
	\[\E(e^{u.(X_t-X_0)})=\Tr\Big(e^{t\mathcal L^{(u)}}\big(\E[\rho_0]\big)\Big)=\sum\limits_{i\in\Z^d} \Tr\Big(e^{t\mathcal L^{(u)}}\big(\rho^{(0)}(i)\big)\Big) \, .\]
	
	 Set $0<\epsilon<t$. Due to Lemma \ref{piouf}, $e^{\epsilon\mathcal L^{(u)}} $ has the property of positivity improving, therefore $e^{\epsilon\mathcal L^{(u)}}\big(\rho^{(0)}(i)\big)$ is strictly positive for all $i\in\Z^d$.\\
	  If we set $r_{u,i}=\inf\Big(\Sp\Big[e^{\epsilon\mathcal L^{(u)}}\big(\rho^{(0)}(i)\big)\Big]\Big)>0$ and $s_{u,i}=\frac{\Tr\Big(e^{\epsilon\mathcal L^{(u)}}\big(\rho^{(0)}(i)\big)\Big)}{\inf\Sp(V_u)}$ 
	   then $\Sp\Big[e^{\epsilon\mathcal L^{(u)}}\big(\rho^{(0)}(i)\big)-r_{u,i} V_u\Big]\subset\R^{+}$ and $\Sp\Big[s_{u,i} V_u - e^{\epsilon\mathcal L^{(u)}}\big(\rho^{(0)}(i)\big)\Big]\subset\R^{+}$, and thus
		\[r_{u,i} V_u\leq e^{\epsilon\mathcal L^{(u)}}\Big(\rho^{(0)}(i)\Big)\leq s_{u,i} V_u \, .\]
		Since $e^{(t-\epsilon)\mathcal L^{(u)}} $ preserves the positivity, we get 
			\[r_{u,i} e^{(t-\epsilon) l_u} V_u\leq e^{t\mathcal L^{(u)}}\Big(\rho^{(0)}(i)\Big)\leq s_{u,i} e^{(t-\epsilon) l_u} V_u \, .\]
			Taking the trace, Lemma \ref{dfl} yields
			\[e^{(t-\epsilon) l_u}\sum\limits_{i\in\Z^d} r_{u,i} \leq \E\big(e^{u.(X_t-X_0)}\big)\leq e^{(t-\epsilon) l_u}\sum\limits_{i\in\Z^d} s_{u,i}  \, .\]
			The sums are finite and positive, then we have 
			\begin{equation*}
			\lim_{t\rightarrow+\infty}\frac{1}{t} \log\Big(\E\big(e^{u.(X_t-X_0)}\big)\Big)=l_u \, .
		\end{equation*}

 Now define $\Lambda_t: u\mapsto \log\Big(\E\big(e^{u.\frac{X_t-X_0}{t}}\big)\Big)$ the logarithm of the moment generating function of $\frac{X_t-X_0}{t} $, and $\Lambda:u\mapsto l_u$. We have shown that 
	\[\lim_{t\rightarrow+\infty}\frac{1}{t}\Lambda_t(tu)=\Lambda(u) \, .\]
Since $\Lambda$ is analytic (Lemma \ref{perronf}), Gärtner-Ellis Theorem can be applied, this proves the LDP and the associated good rate function is \[\Lambda^*:x\mapsto\sup_{u\in\R^d}(u.x-l_u) \, .\]
\end{proof}
\section{Examples}\label{ex}
This section is devoted to the illustration of the CLT and LDP with concrete examples.

Let us first start with two examples in the case $d=1$. Next, we provide an example for $d=2$.

In the case $d=1$, we obtain an open quantum walk on $\Z$ where the walker can stand still some random time, jump to the right and jump to the left. These evolutions are respectly governed by $D_0, D_1$ and $D_2$. We must have: \[D_0+D_0^*+D_1^*D_1+D_2^*D_2=0 \, .\]
In the case $d=2$, we shall need five operators $(D_r)_{r\in\{0,...,4\}} $ satisfying 
\[D_0+D_0^*+\sum_{r=1}^4 D_r^*D_r=0 \, .\]
Recall that we need to check condition $(H_1)$ for getting the CLT (Theorem \ref{tcld}) and we need to check the condition of Definition \ref{irr} for obtaining LDP (Theorem \ref{ldppp}).
\begin{enumerate}

\item The constraint above is respected in this concrete example:
	\[D_0=-\frac{1}{2}I, \ D_1=\dfrac{1}{\sqrt{3}}\begin{pmatrix}
					1&1\\
					0&1
					\end{pmatrix} \text{ and }D_{2}=\dfrac{1}{\sqrt{3}}\begin{pmatrix}
					1&0\\
					-1&1
					\end{pmatrix} \, .\]

								This example falls within the scope of the CLT, in fact $(H1)$ is checked with
								$\rho_{inv}=\frac{1}{2}I$.
							We get
								\[m=\Tr(D_1\rho_{{inv}} D_1^*)-\Tr(D_{2}\rho_{{inv}} D_{2}^*)=0 \ ; \ 
							J=\dfrac{1}{6}\begin{pmatrix}
						-5& 2\\
						2&5
							\end{pmatrix} \ \text{and} \ \sigma^2=\frac{8}{9} \, .\]
							
							Then the CLT states that
						\[\dfrac{X_t}{\sqrt{t}} \underset{t \to +\infty}{\overset{\Lin}{\longrightarrow}}\mathcal{N}\left(0,\frac{8}{9}\right)  \, ,\]
							 as it is illustrated by Figure \ref{f1}.
							
							\begin{figure}
							\begin{center}
							 \includegraphics[width=140mm,height=60mm]{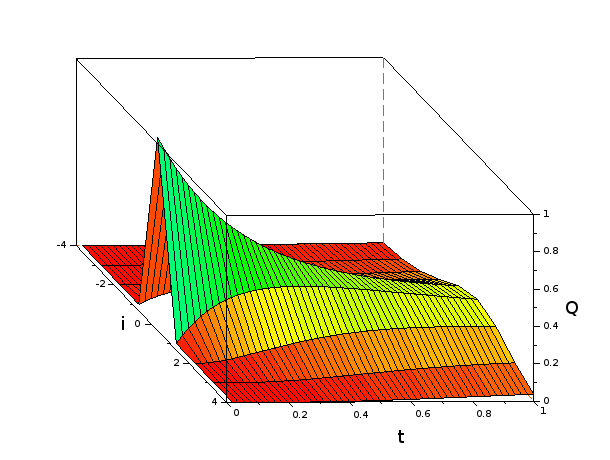}
							 \caption{Evolution of the  distribution of the CTOQW starting from a state localized in 0. 
							 					The $i$-axis stands for the position $|i\rangle$ on $\Z$, 
							 					the $t$-axis stands for the time 
							 					and the $Q$-axis returns the distribution $q_t(i)$.}
							\label{f1}
							\end{center}\end{figure}
							
							One can check that the condition of Definition \ref{irr} is satisfied which implies that $\phi$ is irreducible. Hence the process $\Big(\frac{X_t-X_0}{t}\Big)_{t\geq0} $ satisfies a LDP with a good rate function $\Lambda^*:x\mapsto\sup_{u\in\R}(u.x-l_u)$ (see Figure \ref{f2} for numerical computations).
							
		\begin{figure}
		
				\begin{center}
					\includegraphics[width = 120mm]{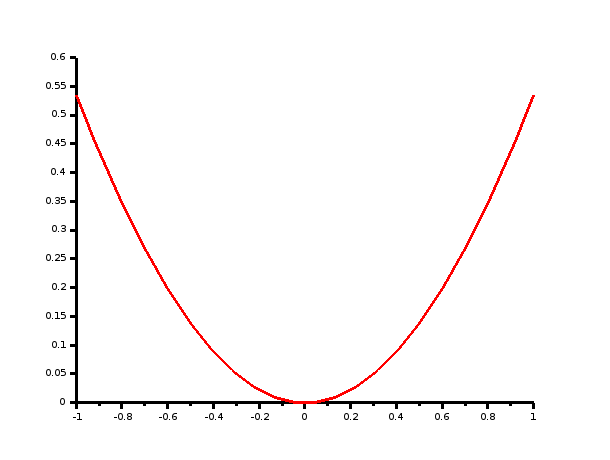}
					 \caption{$\Lambda^*$ for the first example.}
					 \label{f2}
				\end{center}
				\end{figure}

\item For the second example on $\Z$, we consider the following operators
	\[D_0=\begin{pmatrix}
								-\frac{3}{8}&0\\
								0&-\frac{1}{4}
								\end{pmatrix}, \ D_1=\begin{pmatrix}
					0&\frac{1}{2}\\
					\frac{1}{2}&0
					\end{pmatrix} \text{ and }D_{2}=\begin{pmatrix}
					0&\frac{1}{2}\\
					\frac{1}{\sqrt{2}}&0
					\end{pmatrix} \, .\] 
				Concerning the invariant state, after easy computations, we get
								
										\[
									\Lin(\rho)=0
								\iff \rho=\begin{pmatrix}
																\frac{2}{5}&0\\
																0&\frac{3}{5}
																							\end{pmatrix} \, .\]
This implies that $(H_1)$ is satisfied with
						\[\rho_{inv}=\begin{pmatrix}
								\frac{2}{5}&0\\
								0&\frac{3}{5}
															\end{pmatrix} \, .\]
													We can then compute the following quantities involved in the CLT:
													\[m=\Tr(D_1\rho_{{inv}} D_1^*)-\Tr(D_{2}\rho_{{inv}} D_{2}^*)=-\frac{1}{10} \ ; \ 
												J=\frac{1}{10}\begin{pmatrix}
													-1& 0\\
													0 & 1
														\end{pmatrix} \ ; \ \sigma^2=\frac{73}{125}\, .\]
												
			The CLT yields:
												\[\dfrac{X_t+\frac{t}{10}}{\sqrt{t}} \underset{t \to +\infty}{\overset{\Lin}{\longrightarrow}}\mathcal{N}\left(0,\frac{73}{125}\right) \, .\]
												
The reader can easily check that the condition of Definition \ref{irr} is satisfied, then $\phi$ is irreducible. Hence the process $\Big(\frac{X_t-X_0}{t}\Big)_{t\geq0}$ satisfies a LDP with a good rate function $\Lambda^*:x\mapsto\sup_{u\in\R}(u.x-l_u)$. In this case we are able to compute explicitly $l_u$. Recall that $l_u=\max\{\Re(\lambda),\lambda\in\Sp(\mathcal L^{(u)})\}$, where $\mathcal L^{(u)}$ is given by:
	\[\mathcal L^{(u)}=\frac{1}{8}\begin{pmatrix}
	-6& 0& 0& 2(e^u+e^{-u})\\
	0&-5&2(e^u+\sqrt{2}e^{-u})&0\\
	0&2(e^u+\sqrt{2}e^{-u})&-5&0\\
	2(e^u+2e^{-u})&0&0&-4
	
	\end{pmatrix} \, .\]
	
	Hence, tedious computations show that
	$l_u=\frac{1}{32}\Big(20+\sqrt{208+64 e^{2u}+128 e^{-2u}} \Big)$. Figure \ref{f3} displays the good rate function $
	\Lambda^*$.
	\begin{figure}[h!]
					\begin{center}
						\includegraphics[width = 120mm]{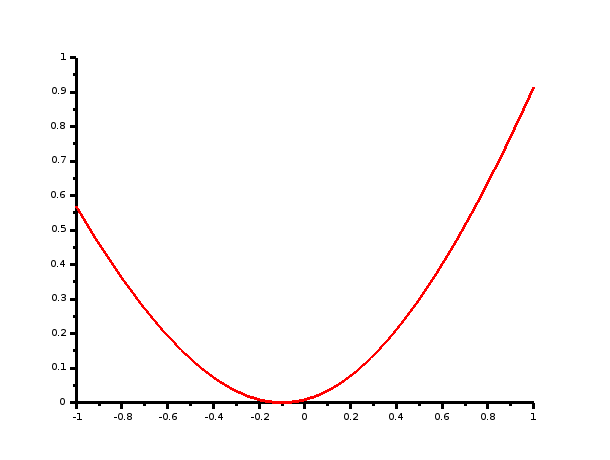}
						 \caption{$\Lambda^*$ for the second example.}
						 \label{f3}
					\end{center}
					\end{figure}

\item Now, we study an example of CTOQWs on $\Z^2$. We choose the following five operators:
\[D_0=\begin{pmatrix}
								-\frac{1}{2}&0\\
								0&-\frac{3}{8}
								\end{pmatrix}, \ D_1=\frac{1}{\sqrt{6}}\begin{pmatrix}
					1&1\\
					0&1
					\end{pmatrix}, \ D_{2}=\frac{1}{2\sqrt{2}}\begin{pmatrix}
																																		0&1\\
																																		0&1
																																		\end{pmatrix}, \]
																																		\[
																D_{3}=\frac{1}{\sqrt{6}}\begin{pmatrix}
																																							1&0\\
																																							-1&1
																																							\end{pmatrix}  
										 \text{ and } D_{4}=\frac{1}{\sqrt{2}}\begin{pmatrix}
										 										1&0\\
										 										0&0
										 										\end{pmatrix} \, ,\]		

which satisfie \[D_0+D_0^*+\sum_{r=1}^4 D_r^*D_r=0 \, .\]
This case is illustrated numerically by Figure \ref{f4}.

\begin{figure}[h]
	\begin{center}

																						\includegraphics[width=140mm,height=80mm]{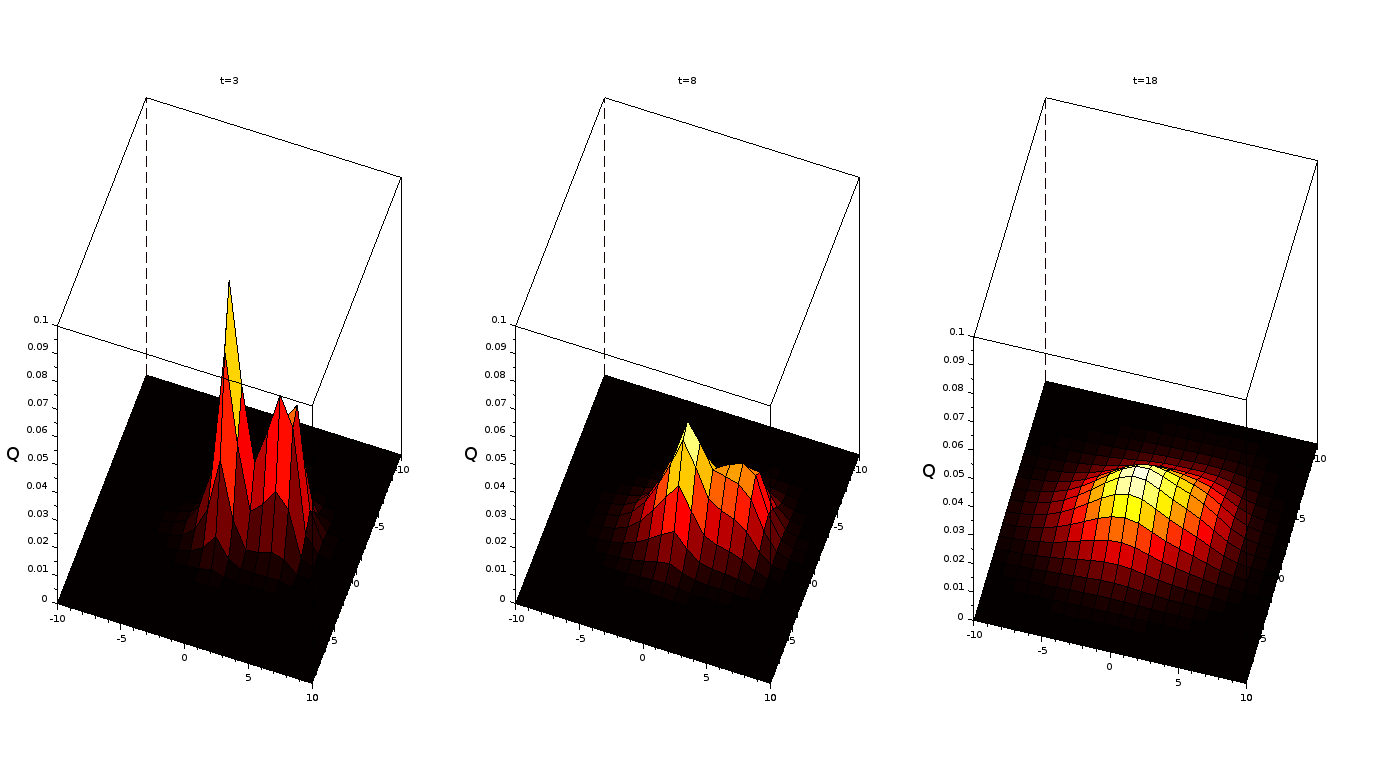}
				\caption{{Distribution of the CTOQW on $\Z^2$ at time t = 3, 8 and 18. }}
						\label{f4}						\end{center}	\end{figure}

Condition $(H1)$ is satisfied with $\rho_{inv}=\begin{pmatrix}
															 																						 										\frac{7}{11}&0\\
															 																						 										0&\frac{4}{11}
															 																						 										\end{pmatrix}$. The quantities for the CLT are
					\[\begin{split}
					m=\left(-\frac{1}{22},-\frac{5}{22}\right),& \ J_1=\frac{4}{33}\begin{pmatrix}	-5 & 2\\
					2 & 5\end{pmatrix},\\
					  J_2=\frac{3}{77}\begin{pmatrix}	-13 & -8\\
							-8 & 13\end{pmatrix}   \text{ and }&
					V=\frac{1}{23958} \begin{pmatrix}	10651 &   -414  \\-414 & 14661	\end{pmatrix} \, .
										\end{split}\]
					 The Central Limit Theorem states then
					 				\[\dfrac{X_t-m t}{\sqrt{t}} \underset{t \to +\infty}{\overset{\Lin}{\longrightarrow}}\mathcal{N}(0,V) \, . \]
 
 Again, the conditon of Definition \ref{irr} is satisfied, then $\phi$ is irreducible. Therefore the process $\Big(\frac{X_t-X_0}{t}\Big)_{t\geq0}$ satisfies a LDP with a good rate function $\Lambda^*:x\mapsto\sup_{u\in\R^d}(u.x-l_u)$. In this case, the matrix $\mathcal L^{(u)}$ is given by:
 	\[ \mathcal L^{(u)}=\frac{1}{24}\left(\begin{smallmatrix}
 	4(-6+e^{u_1}+e^{-u_1}+3e^{-u_2})& 4e^{u_1}& 4e^{u_1}& 4e^{u_1}+3e^{u_2}\\
 	-4e^{-u_1}&-21+4(e^{u_1}+e^{-u_1})  &0 & 4e^{u_1}+3e^{u_2}\\
 -4e^{-u_1}&0&-21+4(e^{u_1}+e^{-u_1})&4e^{u_1}+3e^{u_2}\\
 4e^{-u_1}&-4e^{-u_1}&-4e^{-u_1}&-18+4e^{-u_1}+3e^{u_2}	
 	\end{smallmatrix}\right) \, .\]
 	We are not able to obtain an analytic expression of $l_u$, we then plot an numerical approximation of $\Lambda^*$ (see Figure \ref{f5}).

\end{enumerate}

\begin{figure}[h!]	
					\begin{center}
						\includegraphics[width = 110mm]{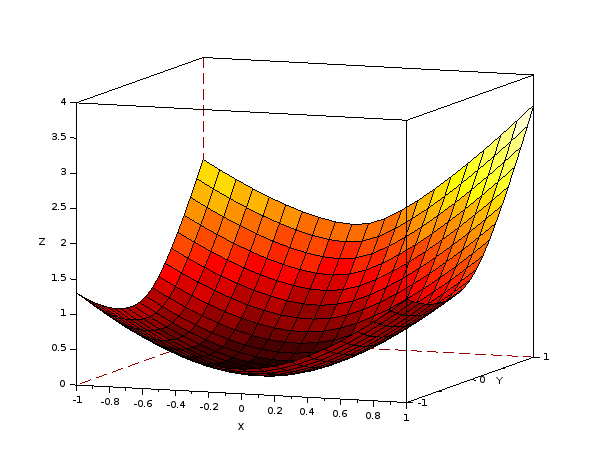}
						 \caption{{$\Lambda^*$ for the last example.}}\label{f5}
					\end{center}
					\end{figure}
			
				\textbf{Acknowledgments.} 
				The author wishes to thank his PhD advisor C. Pellegrini for suggesting
				the problem and his sustained help.
				The author is also grateful to Y. Pautrat for interesting discussions around the topics and examples of the present manuscript. This work is financially supported by the ANR project STOQ: ANR-14-CE25-0003. The author wants to thank the organizers of the summer school "stochastic methods in Quantum Mechanics" in Autrans (2016) where this works have been completed. I thank the referee for helpful comments in improving the exposition.
			\clearpage		
\nocite{*}
\bibliographystyle{plain}
\bibliography{bibtcloqwr}

\end{document}